\newcommand{\Ffeat}{\mathcal{F}_{\mathrm{feat}}}
\newcommand{\Fpool}{\mathcal{F}_{\mathrm{pool}}}
\newcommand{\Wfeat}{W_{\mathrm{feat}}}
\newcommand{\angstrom}{\textup{\AA}}
\def\eqref#1{equation~\ref{#1}}
\def\1{\bm{1}}
\def\va{{\bm{a}}}
\def\vb{{\bm{b}}}
\def\vc{{\bm{c}}}
\def\ve{{\bm{e}}}
\def\vf{{\bm{f}}}
\def\vh{{\bm{h}}}
\def\vn{{\bm{n}}}
\def\vu{{\bm{u}}}
\def\vv{{\bm{v}}}
\def\vx{{\bm{x}}}
\def\mH{{\bm{H}}}
\def\mV{{\bm{V}}}
\def\mX{{\bm{X}}}
\DeclareMathAlphabet{\mathsfit}{\encodingdefault}{\sfdefault}{m}{sl}
\SetMathAlphabet{\mathsfit}{bold}{\encodingdefault}{\sfdefault}{bx}{n}
\def\norm#1{\Vert#1\Vert}
\theoremstyle{plain}
\newtheorem{theorem}{Theorem}[section]
\newtheorem{proposition}[theorem]{Proposition}
\newtheorem{lemma}[theorem]{Lemma}
\theoremstyle{definition}
\newtheorem{definition}[theorem]{Definition}
\theoremstyle{remark}
\newtheorem{remark}[theorem]{Remark}
\icmltitlerunning{SE(3) Equivariant Graph Neural Networks with Complete Local Frames}
\begin{document}

\twocolumn[
\icmltitle{SE(3) Equivariant Graph Neural Networks with Complete Local Frames}



\icmlsetsymbol{equal}{*}
\icmlsetsymbol{a}{\dag}
\begin{icmlauthorlist}
\icmlauthor{Weitao Du}{equal,a,comp}
\icmlauthor{He Zhang}{equal,a,sch}
\icmlauthor{Yuanqi Du}{a,go}
\icmlauthor{Qi Meng}{yyy}
\icmlauthor{Wei Chen}{a,comp}
\icmlauthor{Nanning Zheng}{sch}
\icmlauthor{Bin Shao}{yyy}
\icmlauthor{Tie-Yan Liu}{yyy}
\end{icmlauthorlist}

\icmlaffiliation{yyy}{Microsoft Research, USA. \dag \, denotes that the work was done when the authors were visiting Microsoft Research}
\icmlaffiliation{comp}{Chinese Academy of Sciences, China}
\icmlaffiliation{sch}{Xi'an Jiaotong University, China}
\icmlaffiliation{go}{George Mason University, USA}
\icmlcorrespondingauthor{Qi Meng}{meq@microsoft.com}
\icmlcorrespondingauthor{Weitao Du}{duweitao@amss.ac.cn}

\icmlkeywords{Machine Learning, Equivariance, Graph neural networks}

\vskip 0.3in
]



\printAffiliationsAndNotice{\icmlEqualContribution} 

\begin{abstract}
Group equivariance (e.g. SE(3) equivariance) is a critical physical symmetry in science, from classical and quantum physics to computational biology. It enables robust and accurate prediction under arbitrary reference transformations. In light of this, great efforts have been put on encoding this symmetry into deep neural networks, which has been shown to improve the generalization performance and data efficiency for downstream tasks. 
Constructing an equivariant neural network generally brings high computational costs to ensure expressiveness. Therefore, how to better trade-off the expressiveness and computational efficiency plays a core role in the design of the equivariant deep learning models. 
In this paper, we propose a framework to construct SE(3) equivariant graph neural networks that can approximate the geometric quantities efficiently.
Inspired by differential geometry and physics, we introduce equivariant local complete frames to graph neural networks, such that tensor information at given orders can be projected onto the frames. The local frame is constructed to form an orthonormal basis that avoids direction degeneration and ensure completeness. Since the frames are built only by cross product operations, our method is computationally efficient. 
We evaluate our method on two tasks: Newton mechanics modeling and equilibrium molecule conformation generation. Extensive experimental results demonstrate that our model achieves the best or competitive performance in two types of datasets (Code will be released soon). 
\end{abstract}

\section{Introduction}
The success of CNNs \citep{krizhevsky2012imagenet, he2016deep}  and its SO(2) group extension \citep{cohen2016group} are sufficient to justify the benefits of explicitly translationally
equivariant or SE(2) equivariant neural network architectures. On the other hand, Graph Neural Networks (GNNs), which are superior to modeling the high-dimensional structured data with permutation equivariance, 
bring a new opportunity to model physical systems (especially the various many-body systems) in an end-to-end manner. However, since 3D many-body systems follow other constraints like SE(3) symmetry, pure black-box GNN models show limitations on generalization in this scenario and symmetry-preserving GNN models have become a hot research direction.

The main challenge to be solved for developing general equivariant GNN models is how to express tensors and conduct nonlinear operations in a reference-free way. To represent and manipulate equivariant tensors of arbitrary orders, some approaches resort to equivariant function spaces such as spherical harmonics \citep{thomas2018tensor, fuchs2020se,pmlr-v119-bogatskiy20a,fuchs2021iterative} or lifting the spatial space to high-dimensional spaces such as Lie group space \citep{cohen2016group,cohen2018spherical, cohen2019general,finzi2020generalizing,hutchinson2021lietransformer}. Since no restriction on the order of tensors is imposed on these methods, sufficient expressive power of these models is guaranteed. 
Unfortunately, transforming a many-body system into those high-dimensional spaces or calculating equivariant functions (e.g. irreducible representations and tensor product decomposition) usually brings excessive computational cost and optimization difficulty, which is unacceptable in some real-world scenarios. 
On the other hand, there are also equivariant neural networks \cite{schutt2017schnet, satorras2021en, kohler2019equivariant} 
that directly implement equivariant operations in the original space, 
providing an efficient way to preserve equivariance without complex 
equivariant embeddings. However, most of these models only use radial direction as incomplete frames along with embedding of scalar features and abandon higher-order tensor inputs. Thus, they face the direction degeneration problem and are insufficient for expressing more complex geometric quantities like torsion force (See Section \ref{motivation}).
In light of this, how to better trade-off the expressiveness and computational efficiency plays a core role in the design of equivariant models.

In this paper, we propose a framework to construct SE(3) equivariant network with \textbf{c}omplete \textbf{lo}cal \textbf{f}rames, called \textbf{ClofNet} that can faithfully and efficiently approximate the geometric quantities. Inspired by differential geometry and physics, we incorporate the equivariant local complete frames into graph neural networks, where complete means each local frame forms an orthogonal basis (reference frame) in the 3D vector space with no direction degeneration. 
To faithfully express the geometric quantities (e.g., scalars, position vectors), we introduce a scalarization block to project them onto our local frames, transforming the tensors into the scalarized coefficients. Afterward, the coefficients are fed into a flexible graph neural network for message passing.
The group equivariance of the whole model is guaranteed due to the equivariance of the frames and the expressiveness is guaranteed due to the completeness of the frames. Moreover, the construction of frames only involves cross product operations,
making our method computationally efficient (e.g., compared with the Clebsch-Gordan tensor product).
Finally, the method to construct the frames is extended to scenarios where the input contains high-order tensors in a similar way. We evaluate the proposed framework on two many-body scenarios that require equivariance with extensive ablation studies: (1) the synthetic many-body system dynamics modeling and (2) the real-world molecular conformation generation. Our model achieves the best performance or competitive results in various types of scenarios. Especially, the proposed ClofNet still keeps competitive performance even with $0.4\%$ training samples on the many-body system, showing its superiority on sample complexity. 

\section{Preliminaries}
In this section, we set up the necessary mathematical preliminaries for understanding the follow-up sections.
Let $\mX=(\vx_1,\dots,\vx_N) \in \mathbb{R}^{N\times 3}$ be a many-body system living in $\mathbb{R}^3$, where $N$ is the number of particles. For particle $i$, we use $\vx_i(t)$ to denote its position at time $t$ and
define its neighborhood particles as  $\mathcal{N}(\vx_i(t))$. 
Let $\times$ denote the cross product of two vectors and $\otimes$ denote the tensor product.   

\paragraph{SE(3) group and Equivariance. }
In the Euclidean space $\mathbb{R}^3$, 
we consider affine transformations that preserve the distance between any two points, i.e., the isometric group SE(3). We call it the symmetry group w.r.t. the Euclidean metric, and it turns out that SE(3) can be generated by the translation group and the 3D rotation group SO(3) (See rigorous definition in Appendix). 

Once given a symmetry group, it's valid to define quantities that are ``equivariant'' under the symmetry group. 
Given a function $f: \mathbb{R}^m \rightarrow \mathbb{R}^n$, assuming the symmetry group $G$ acts on $\mathbb{R}^m$ and $\mathbb{R}^n$ and we denote the actions by $T_g$ and $S_g$ respectively, then $f$ is $G$-\textbf{equivariant} if
$$f(T_g x) = S_g f(x),\ \ \ \forall x \in \mathbb{R}^m\ \text{and}\ g \in G.$$
For SO(3) group, if {$n=1$, i.e., the output of $f$ is a scalar, then the group action on $\mathbb{R}^1$ is the identity map, in this case $f$ should be $SO(3)$-invariant \citep{thomas2018tensor}: $f(T_gx) = f(x).$}

The geometric tensors that satisfy the group equivariance are formally defined as tensor field.  
{Let {\small$\{\frac{\partial}{\partial x_{i}}\}_{i=1}^3$} and {\small$\{dx^i\}_{i=1}^3$} be the tangent vectors and dual vectors in $\mathbb{R}^3$ respectively}. Then recall the definition of tensor field on $\mathbb{R}^3$ w.r.t. the SO(3) group:
\begin{definition} \label{ten} [Definition 2.1.10 of \citep{jost2008riemannian}]
An \textbf{(r,\,s)-tensor field} $\theta$ is a multi-linear map from a collection of r dual vectors and s vectors in $\mathbb{R}^3$ to $\mathbb{R}$:
{\small$\theta(x) = \theta^{i_1 \cdots i_r}_{j_1 \cdots j_s}\  \frac{\partial}{\partial x_{i_1}}\otimes \cdots \otimes \frac{\partial}{\partial x_{i_r}}\otimes dx^{j_1} \otimes \cdots \otimes dx^{j_s}.$}
It implies that under SO(3) transformation $g :=\{g_{ij}\}_{1 \leq i,j \leq 3}$,  the tensor field $\theta$ transforms equivariantly:
{\small$\theta^{i_1' \cdots i_r'}_{j_1' \cdots j_s'} = g_{i_1'i_1}\cdots g_{i_r'i_r}g^T_{j_1j_1'}\cdots g^T_{j_sj_s'} \theta^{i_1 \cdots i_r}_{j_1 \cdots j_s},$}
where $g^T$ is the inverse of $g$.
\end{definition}
{The notion of tensor field is introduced for expressing and embedding geometric and physical quantities, under which the representation satisfies the group equivariance. The tensor or geometric tensor in this paper refers to the tensor field defined in Definition~\ref{ten}. }
A canonical example of $(1,0)$-type tensor fields is the differential(velocity) of a dynamical system $\mX(t)$, evolving w.r.t. time $t$. We name it an  \emph{equivariant vector field} (see Appendix \ref{evo}).  

\paragraph{Graph neural networks. }Since many body system modeling is independent of the labeling on the observed particles (permutation equivariance), we will use graph-based message-passing mechanism \citep{gilmer2020message} in this paper. Given a graph $G = (V(G),E(G))$,
let $h_i$, $x_i$ and $e_{ij}$ be the node features, node positions and edge attributes, respectively.
EGNN \citep{satorras2021en} provides an efficient way to learn equivariant representations, which updates edge message $m_{ij}$, node embeddings $h_i$ and coordinates $x_i$ as:
\begin{align}
\label{eq:egnnmessages}
m_{ij} &= \phi_m(h_i, h_j, \lVert \vx_j - \vx_i \rVert^2, e_{ij}), \nonumber\\ 
\vx_{i} &= \vx_i + C\sum_{j \neq i}(\vx_i - \vx_j)\phi_x(m_{ij}) \\ \nonumber
h_{i} &= \phi_h(h_i, \sum_{j \in \mathcal{N}(i)}m_{ij})
\end{align}
where $\phi_m$, $\phi_x$ and $\phi_h$ denote three neural networks. $C$ is the normalization factor.
 
\section{Motivation from many body interactions. } \label{motivation}
 In this section, we provide an example to motivate our methodology. As shown in Figure \ref{fig:motivation}, we consider a two-body conservative sub-system $(\vx_i,\vx_j)$ in many-body systems with no time dependency, which can be seen as a graph with two nodes and one edge. Then the force on each particle is the gradient of the potential energy. In general, the groundtruth potential energy function of the two particles is decomposed into two parts: $U(\vx_i,\vx_j) := U_1(\norm{\vx_i-\vx_j}) + U_{\text{ext}}(\vx_i,\vx_j)$, where the first part is a function of the relative distance $\norm{\vx_i-\vx_j}$ between particles (e.g., electrostatic force), the second part describes the influence of external force fields that may depend on both relative distance and angles (e.g. electromagnetic field, torsion force (\ref{torsion})). Therefore, $\nabla U_{\text{ext}}(\vx_i,\vx_j)$ can be along arbitrary directions in 3D space besides the radial direction $\vx_i - \vx_j$. Most mainstream approaches in equivariant neural network tackle such force prediction problem by taking invariant features (e.g., relative distance) as input and expressing the force along the radial direction, i.e., $\hat{F_i}=k \cdot (\vx_i-\vx_j)$ \citep{schutt2017schnet, satorras2021en}, which is not sufficient when the gradient $\nabla U_{\text{ext}}(\vx_i,\vx_j)$ is not along radial direction.   To remedy this issue, we put forward an orthogonal SO(3) equivariant frame (basis) for this two-body system.  As shown in Figure \ref{fig:motivation}, we take $\vx_i - \vx_j$ as the first component/direction $\va$, and then include $\vx_i \times \vx_j$ as the second direction $\vb$. Finally, the cross product of the first two directions is taken as the third direction $\vc$. Using such equivariant frame, any force directions can be equivariantly expressed \textbf{with no direction degeneration} by decomposing itself into the three orthogonal directions. The theoretical and technical details about our augmented frames will be discussed later.

\begin{figure}
    \centering
    \includegraphics[width=0.7\linewidth]{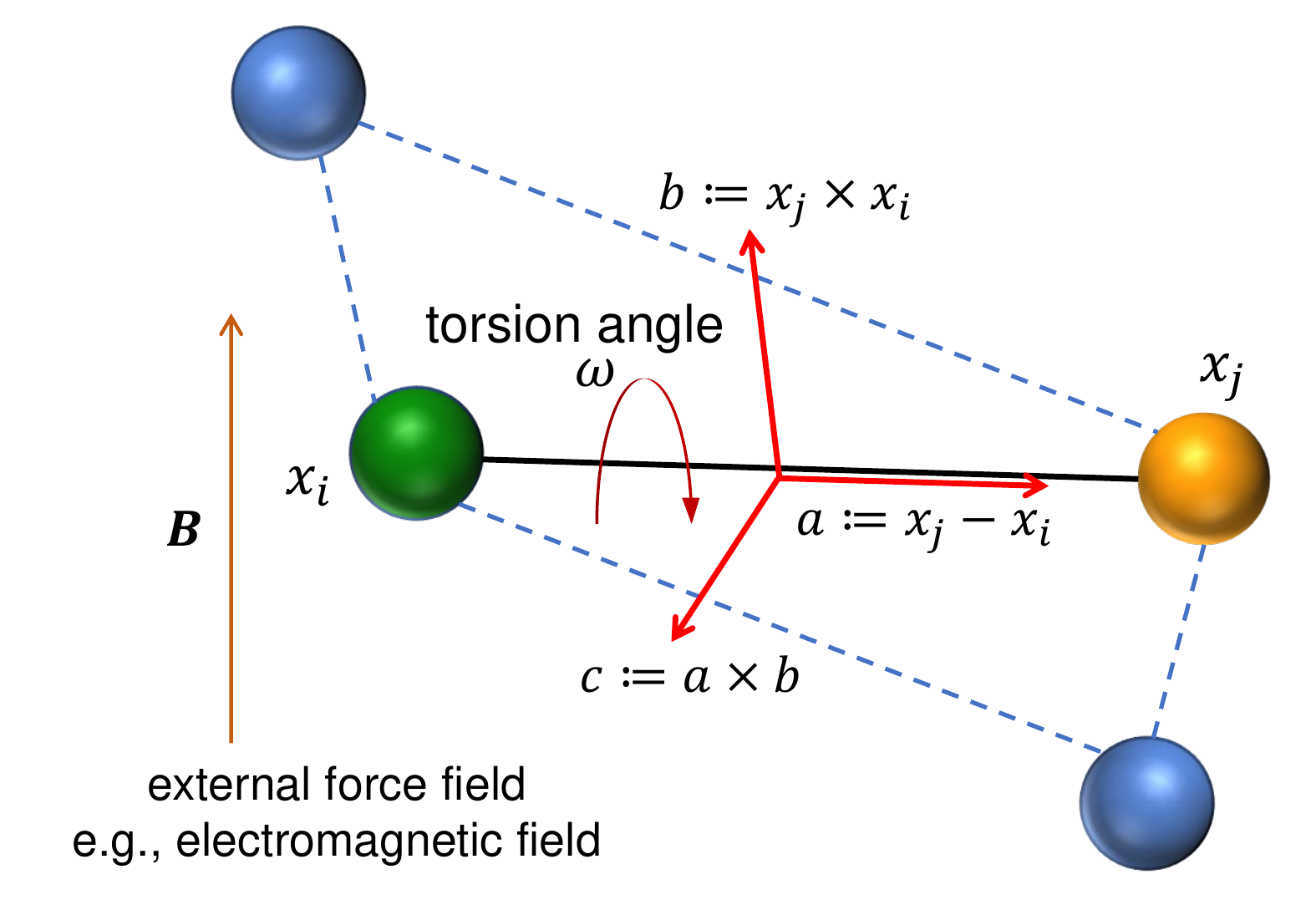}
    \caption{Example of a two-body system to motivate our complete local frame.}
    \label{fig:motivation}
    \vskip -0.2in
\end{figure}

\begin{figure*}
    \centering
    \includegraphics[width=0.8\linewidth]{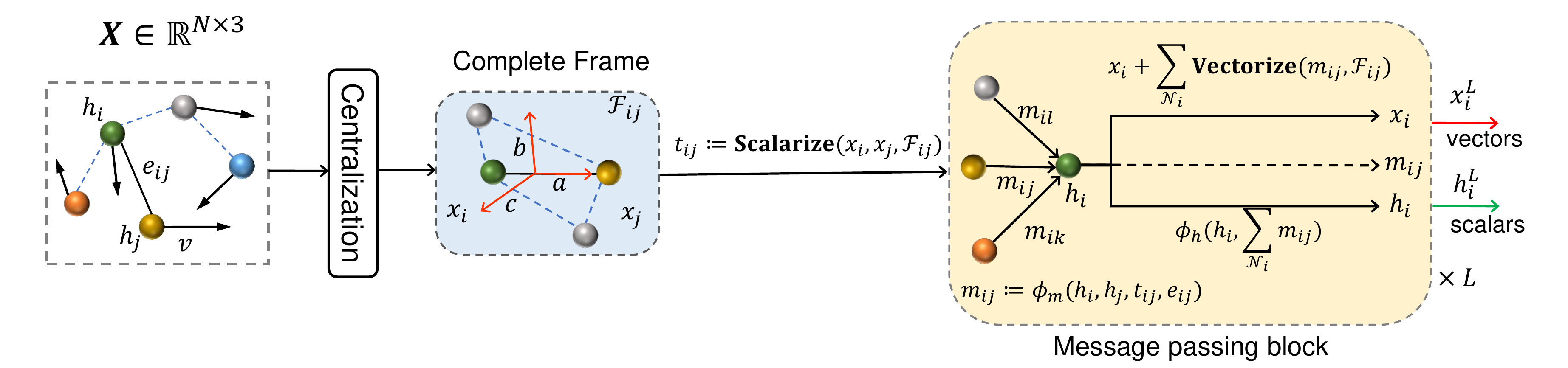}
    \caption{An overview of ClofNet. For a many-body system $\mX$, we first centralize the positions to preserve translation equivariance. Then we introduce a tuple of edge-level complete frame $\mathcal{F}_{ij}$ to transform the geometric tensors $\vx_i$ into SO(3)-invariant scalars. Afterwards, the scalar embeddings $t_{ij}$, pre-defined node features $h_i$ and edge features $e_{ij}$ are fed to the graph message-passing block to learn edgewise embeddings $m_{ij}$. Finally, a vectorization block transforms the edgewise embeddings into nodewise vector fields $\vx_i^L$ or scalar fields $h_i^L$.}
    \label{fig:framework}
\end{figure*}
\section{Methodology}
In this section, we provide a detailed description of ClofNet.  
To preserve the physical symmetries of the system, we represent the system as a spatial graph and construct \textbf{ClofNet} based on it with three key components: (1) a \textbf{Scalarization} block to project the geometric tensors onto our constructed equivariant frames and concatenate all coefficients as the SO(3)-invariant scalar representations attached to each node; (2) a \textbf{Graph message-passing} block to learn SO(3)-invariant edgewise embeddings by propagating and aggregating the scalars on the graph; and (3) a \textbf{Vectorization} block to reverse scalar representations back to tensors. A brief overview of ClofNet is illustrated in Figure \ref{fig:framework}. 
The translation symmetry can be easily realized by moving the particle system's centroid at $t=0$ to the origin (the \textbf{Centralization} operation in Figure \ref{fig:framework}). Permutation equivariance automatically holds for the message-passing block. We provide detailed proofs about these symmetries in Appendix \ref{symmetry}. Now we concentrate on SO(3) symmetry. 

\paragraph{Scalarization Block}
The scalarization block is designed to transform geometric tensors into edgewise SO(3)-invariant scalar coefficients/features by introducing a novel tuple of local frames. 
Consider a particle pair $(\vx_i(t), \vx_j(t) )$ at time $t$, where $\vx_j(t) \in \mathcal{N}(\vx_i(t))$.
We define the edgewise SO(3)-invariant scalars as
$s_{ij} := \mathrm{Scalarize} (\vx_i(t), \vx_j(t),\mathcal{F}_{ij})$, where $\mathrm{Scalarize}$ is the scalarization operation under local equivariant frames $\mathcal{F}_{ij}$ defined below.  

\textbf{(1) Local equivariant frame construction. } Following the last section, based on the particle pair $(\vx_i(t), \vx_j(t) )$,
let $\va_{ij}^t = \frac{\vx_i(t) - \vx_j(t)}{\norm{\vx_i(t) - \vx_j(t)}}$. Define
\begin{equation} \label{basis}
\vb_{ij}^t = \frac{\vx_i(t) \times \vx_j(t)}{\norm{\vx_i(t) \times \vx_j(t)}}\ \ \text{and}\ \ \ \vc_{ij}^t = \va_{ij}^t \times \vb_{ij}^t,\end{equation}
Then we build an SO(3)-equivariant frame $\mathcal{F}_{ij}:= \mathrm{EquiFrame}(\vx_i, \vx_j)=(\va_{ij}^t,\vb_{ij}^t,\vc_{ij}^t)$. In practice we add a small constant $\epsilon$ to the normalization factor in case that $\vx_i$ and $\vx_j$ collapse.
If the $3 \times 3$ matrix $(\va_{ij}^t,\vb_{ij}^t,\vc_{ij}^t)$ is non-degenerate, then $\mathcal{F}_{ij}$ is complete in the sense that it formulates a \emph{local orthonormal basis}, where `local' means the frame is a relative reference frame on particles located at the tangent space of $\vx_i(t)$. Note that
these frames are permutation-equivariant among particles.
The extreme event that all $\mathcal{F}_{ij}$ ($j\in\mathcal{N}(i)$) degenerate for node $i$ only happens when all particles are restricted to a straight line, which is a measure zero set in $\mathbb{R}^3$. Therefore, we assume $\mathcal{F}_{ij}$ is non-degenerate from now on. Proof for SO(3)-equivariance of $\mathcal{F}_{ij}$ is provided in Proposition \ref{exp}.    

\textbf{(2) Invariant scalarization of geometric tensors. }  \label{Invariant scalarization}
With the complete equivariant frame, we realize the scalarization operation \citep{kobayashi1963foundations} as follows. First of all, we project the input vector of particle $\vx_i$ onto the frame $\mathcal{F}_{ij}$ ($(\va_{ij}^t,\vb_{ij}^t,\vc_{ij}^t)$), and the corresponding \emph{coefficients} are obtained via inner product as follows: 
\begin{equation} \label{sca}
    (\vx_i \cdot \va_{ij}^t, \vx_i \cdot \vb_{ij}^t, \vx_i \cdot \vc_{ij}^t).
\end{equation}
Here we demonstrate that the set of obtained coefficients is actually a SO(3)-invariant scalars tuple and define the process obtaining such scalars as $\mathrm{Scalarize}$ operation.
Let $g \in SO(3)$ be an arbitrary orthogonal transformation, then $\vx_i \rightarrow g \vx_i$, and
$(\va_{ij}^t,\vb_{ij}^t,\vc_{ij}^t) \rightarrow (g\va_{ij}^t,g \vb_{ij}^t,g \vc_{ij}^t).$
We can derive that Equation (\ref{sca}) undergoes
\begin{align}
(\vx_i \cdot &\va_{ij}^t, \vx_i \cdot \vb_{ij}^t, \vx_i \cdot \vc_{ij}^t) \nonumber\\
&\rightarrow ((\vx_i)^T g^T g\va_{ij}^t, (\vx_i)^T g^T g\vb_{ij}^t, (\vx_i)^T g^T g\vc_{ij}^t) \nonumber \\ \label{new} &= (\vx_i \cdot \va_{ij}^t, \vx_i \cdot \vb_{ij}^t, \vx_i \cdot \vc_{ij}^t),    
\end{align}
where we use the fact the transpose of $g$ satisfies $g^T g = I$ ($I$ is the identity matrix), according to the definition of SO(3) group. 
Next, we discuss that the $\mathrm{Scalarize}$ operation could be extended to transform \textbf{high-order geometric tensors} into SO(3)-invariant scalars.
Suppose the input data contain (2,0)-type tensors, then by extending the each local complete frame $(\va_{ij},\vb_{ij},\vc_{ij})$ through tensor product, it's easy to check that
\begin{align} 
\label{tensor product bases}
\begin{aligned}
&\{\va_{ij} \otimes \va_{ij}, \vb_{ij} \otimes \vb_{ij},\vc_{ij} \otimes \vc_{ij}, \va_{ij} \otimes \vb_{ij},\\
&\vb_{ij} \otimes \va_{ij},\va_{ij} \otimes \vc_{ij},\vc_{ij} \otimes \va_{ij},\vb_{ij} \otimes \vc_{ij},\vc_{ij} \otimes \vb_{ij}\}
\end{aligned}
\end{align}
forms an equivariant (the equivariance of the tensor products under the equivariant frame are given by Def \ref{ten}) orthonormal frame of the (2,\,0)-type tensor space. Then the scalarization of a tensor is just to concatenate the \emph{coefficients} of the linear expansion under this frame. In the same way as (\ref{new}), we can prove that the coefficients are also SO(3)-invariant scalars. Given a (2,\,0)-symmetric tensor $\pmb{\theta}$ (e.g., energy-momentum tensor), under the complete frame $\mathcal{F}_{ij} =(\va,\vb,\vc)$, $\pmb{\theta}$ can be expressed as:
\begin{align}
  \pmb{\theta} =& \theta^{aa}\va \otimes \va + \theta^{bb}\vb \otimes \vb +\theta^{cc}\vc \otimes \vc  + \theta^{ab}(\va \otimes \vb + \vb \otimes \va) \nonumber \\ 
  \label{scalar} & + \theta^{ac}(\va \otimes \vc + \vc \otimes \va) +  \theta^{bc}(\vb \otimes \vc + \vc \otimes \vb).
\end{align}
The scalars tuple $s: = \{\theta^{aa}, \theta^{ab}, \dots\}$ are the scalarization of $\pmb{\theta}$ under the equivariant frame $\mathcal{F}_{ij}$, which are SO(3)-invariant and will be discussed in detail in Section 5. Let $X$ be the original linear space of all tensors, $\mathcal{S}(X)$ be the space of scalars tuples. Then one of the important benefits of scalarization under an equivariant frame is the scalarization transforms $X$ to $\mathcal{S}(X)$ in an \textbf{invertible} way. Given a scalars tuple $s := \{x^a,x^b,x^c\}$ that represents our scalarized input of some position vector $\vx$ under the equivariant frame $(\va,\vb,\vc)$, then the inverse from $s$ to vector $\vx$ is given by 
\begin{equation} \label{invert}
\vx \equiv x^a \va + x^b \vb + x^c \vc.\end{equation} 
In our experiments, most of the geometric input is  position vectors $\{\vx_i\}_{i=1}^n$, i.e., $(1, 0)$-type tensors. For the simplicity of description, we denote all other vector features (e.g., velocity) as $\pmb{\chi}_i \in \mathbb{R}^{m \times 3}$, where $m$ is the number of potential vector types.
The $\mathrm{Scalarize}$ operation is defined as: $\mathrm{Scalarize}(\vx_i,\, \vx_j,\, \pmb{\chi}_i, \pmb{\chi}_j,  \mathcal{F}_{ij}) = (\vx_i \cdot \va_{ij}^t,\, \vx_i \cdot \vb_{ij}^t,
\vx_i \cdot \vc_{ij}^t,\, \vx_j \cdot \va_{ij}^t, \vx_j \cdot \vb_{ij}^t,\, \vx_j \cdot \vc_{ij}^t, \dots)$.

\paragraph{Graph message-passing Block} \label{message-passing}
After encoding the geometric tensors into SO(3)-invariant scalars $s_{ij}$, we first embed them alone with other pre-defined node/edge attributes ($h_j$, $e_{ij}$) into high-dimensional representations, and leverage a graph message-passing (GMP) block 
to learn the SO(3)-invariant edgewise message embeddings $m_{ij}$ by propagating and aggregating information on the graph $\mathcal{G}_X$. Since the block is processing on invariant scalars, ClofNet is \textbf{flexible} in the sense that any nonlinear activations can be applied when performing message-passing such that the outputs are still invariant scalars. For more complex tasks, our model is capable of including the attention mechanism. We provide further design details in Appendix \ref{app:ga_b}.
\begin{algorithm}[tb]
   \caption{ClofNet}
   \label{alg:evfn}
\begin{algorithmic}[1]
   \STATE {\bfseries Input:} $\mathbf{X}=(\vx_1,\dots,\vx_N) \in \mathbb{R}^{N\times 3}$, $h_i \in \mathbb{R}^{h}$, $e_{ij} \in \mathbb{R}^{e}$, $\pmb{\chi}_i \in \mathbb{R}^{m \times 3}$, graph $\mathcal{G}_X$
   \STATE Initialize $\textbf{X}^1 \leftarrow \mathbf{Centralize}(\textbf{X})$.
   \STATE $\mathcal{F}_{ij}^1=\mathbf{EquiFrame}(\vx_j^1,\, \vx_j^1)$; 
   \STATE $s_{ij}=\mathbf{Scalarize}(\vx_i^1,\, \vx_j^1, \pmb{\chi}_i, \pmb{\chi}_j,\mathcal{F}_{ij}^1)$;
   \FOR{($l=1; l < L; l++$)}
    \STATE $m_{ij}^l = \phi_m^l(s_{ij}, h_i^l, h_j^l, e_{ij})$;
    \STATE $\mathbf{h}_i^{l+1} = \phi_h^l (\mathbf{h}_i^{l},\sum_{j \in \mathcal{N}(i)} m_{ij}^l)$;
    \STATE $\mathcal{F}_{ij}^{l+1}=\mathbf{EquiFrame}(\vx_i^l,\, \vx_j^l)$; 
     \STATE $\vx_i^{l+1} = \vx_i^{l} + \frac{1}{N}\sum_{j \in \mathcal{N}(\vx_i)}{\mathbf{Vectorize}(m_{ij}^l,\, \mathcal{F}_{ij}^l)}$;
   \ENDFOR
   \STATE {\bfseries Output: $x_i^L$ or $h_i^L$}
\end{algorithmic}
\end{algorithm}

\paragraph{Vectorization block} 
This block is necessary only for SO(3)-equivariant vector output (or general tensor outputs). For scalar output, a simple pooling layer following the graph message-passing block is sufficient. Given the propagated edgewise message $m_{ij}$, the vectorization block is designed to transform these scalars back to equivariant vectors (inverse of scalarization), which requires pairing $m_{ij}$ with the corresponding complete frame.
More precisely, we first project $m_{ij}$ into a scalar triple $\{x_{ij}^a,x_{ij}^b,x_{ij}^c\}$, then define the vectorization process as:
    \begin{equation} \label{vector}
    (x_{ij}^a,x_{ij}^b,x_{ij}^c) \xrightarrow{\text{Pairing}} x_{ij}^a\va_{ij} + x_{ij}^b\vb_{ij} + x_{ij}^c\vc_{ij}.\end{equation}
We encapsulate the pairing process as $V_{ij} = \mathrm{Vectorize}(m_{ij},\mathcal{F}_{ij})$. 
Finally, we aggregate all vectors $V_{ij}$ associated with $\vx_i$ to estimate the ground-truth vector field $V_i$. Since the local frames are SO(3) equivariant, $V_i$ is also SO(3) equivariant. This method is also applicable to other types of tensors by pairing with the corresponding tensor product (\ref{tensor product bases}) of the equivariant frame.

In conclusion, we give a graphic illustration on learning an SO(3)-equivariant vector $\vv(\vx_1, \vx_2)$, with vector input $\vx_1$, $\vx_2$. Under the equivariant frame of $(\vx_1, \vx_2)$, we build ClofNet as a commutative diagram:
\begin{tikzcd}[column sep=scriptsize, row sep=scriptsize] \label{commutative}
     \vx_1, \vx_2 \arrow[r,  "\vf"] \arrow[d, "\text{Scalarize}"]
      & \vv(\vx_1, \vx_2)  \\
     s_1,s_2 \arrow[r, "\Tilde{f}"]
      & \Tilde{f}(s_1,s_2) \arrow[u,  "\text{Vectorize}"'],
\end{tikzcd}
where $\Tilde{f}$ is our learning model parameterized by a non-linear neural network, and $s_i \in \mathcal{S}(X)$ is the invariant scalar tuple corresponding to the equivariant vector $\vx_i$. Then by definition, we realize $\vf: \mathbb{R}^3 \times \mathbb{R}^3 \rightarrow \mathbb{R}^3$:
\begin{equation}\label{eq:equi-f}
  \vf = \text{Vectorize} \circ \Tilde{f} \circ \text{Scalarize},
\end{equation}
which is proved to be SO(3)-equivariant in next section. So far, we have achieved permutation and SE(3) equivariance by employing the centralization and the three blocks. 
The workflow of our method is summarized in Algorithm \ref{alg:evfn}. Note that we can leverage the scalarization block in each message-passing block by inserting the $\mathrm{Scalarize}$ operation after line 9 in Algorithm \ref{alg:evfn}. Empirically, we find no significant gains using this modification in our two experiments. Therefore we exclude it to reduce the computation cost. The computational complexity (scalability) of our algorithm is given in appendix \ref{SCALABILITY}.

\section{Theoretical properties of ClofNet}
\label{draft:theory}
\paragraph{Equivariance guarantee} First, the translation invariance and equivariance are guaranteed by moving the center of a many-body system. The $SO(3)$ group equivariance of ClofNet can be obtained by the following Proposition.
\begin{proposition}
\label{Equivariance}
(1) The local complete frames $(\va_{ij}^t,\vb_{ij}^t,\vc_{ij}^t), $ $\forall i,j$ defined by Equation~(\ref{basis}) is equivariant under SO(3) transformation of the spatial space. 

(2) The Scalarize operation, i.e., $\mathrm{Scalarize}(\vx_i,\vx_j,\mathcal{F}_{ij})$ is invariant under SO(3) transformation.
\end{proposition}

The proof for the above proposition is put into Appendix \ref{app:mini}, \ref{equi}. Using this proposition, we conclude that $\vf = \mathrm{Vectorize} \circ \Tilde{f} \circ \mathrm{Scalarize}$ defined in Equation~(\ref{eq:equi-f}) is equivariant under SO(3) transformation.

\paragraph{Expressiveness of ClofNet}
The expressive power of ClofNet can be decomposed into two parts: the scalarization (and vectorization) part and the ordinary Graph Neural Network part. By the \textbf{invertibility} of (\ref{invert}), there exists a bijection between a tensor $\vx$ and its scalarization $\mathcal{S}(\vx)$. It implies that the scalars can fully characterize the input tensors losslessly (an illustration in terms of mutual information can be found in Appendix \ref{Information loss}). Then the expressive power lies in the expressiveness of the flexible graph message-passing block. The following Theorem verifies the universal approximation property of ClofNet if the equipped graph message-passing block is expressive. See more details and proof in Appendix~\ref{univer}.
\begin{theorem} \label{universal approximation}
If the graph message-passing block is selected to be a neural network that can approximate permutation-equivariant polynomials on the graph (e.g. the tensorized graph neural network \citep{maron2018invariant} or the minimal universal architecture \cite{dym2020universality}), and a linear fully-connected layer connects the output of the graph message-passing block with the vectorization block, then ClofNet has the universality in the space of continuous $SE(3)$ and permutation equivariant functions. 
\end{theorem}
\textbf{Remark: } 
(1) The proof relies on the equivalence between direct tensor products ( Equation (\ref{product})) of equivariant tensors and polynomials of scalars obtain by scalarization, which is realized only through our complete frames. 
(2) From the theoretical aspect, expressing all permutation-equivariant polynomials requires expressing multiplication operations of invariant scalars on different nodes, which means the scalarization block should contain the vector features of all nodes simultaneously, not only direct neighbors. In practice, we follow previous works to select the common graph convolutional network or the graph transformer network as our graph message-passing block, which brings more efficiency if the contributions of higher-order hops are negligible. 

\section{Related Work}
Various symmetries are considered to be embedded into the network for different tasks. 
Vanilla CNN models are naturally translation invariant, and more works on 2D-symmetry include \citep{he2021efficient,franzen2021general,dieleman2016exploiting,cohen2016group,cohen2016steerable,weiler2019general,shen2020pdo}.
Another common symmetry for 3D objects is SO(3) group \citep{esteves2019equivariant,esteves2019cross,worrall2018cubenet,kondor2021fully}. 
In terms of methodology, existing equivariant networks 
can be roughly classified into two categories by whether conducting all operations in the original space or not.
The first category of methods lift the data into high-dimensional spaces (e.g., lie group) or introduce equivariant functions (e.g., spherical harmonics) to preserve equivariance \citep{worrall2017harmonic, thomas2018tensor, kondor2018clebsch, weiler2018learning, weiler20183d, weiler2019general, esteves2020spin,romero2020attentive,klicpera2020directional, anderson2019cormorant,batzner2021se}.
They exhibit sufficient expressive power \citep{dym2020universality} but usually bring extra computational costs for calculating spherical harmonic embedding (steerable features \citep{2110.02905}) or performing integration on Lie groups. If we restrict the equivariant function class to be linear, it turns out that group convolution in a message-passing form is the legal implementation \citep{bekkers2019b,kondor2018generalization,2110.02905,anderson2019cormorant}.

Our work follows methods of the second category that operate on the original space in a computationally efficient way \citep{schutt2018schnet,kohler2019equivariant,shi2021learning,deng2021vector}. However, most of these approaches (e.g., EGNN \citep{satorras2021en}) abandon a certain amount of geometric (tensor) information, causing their expressive power to be restricted. 
Recall the EGNN algorithm (\ref{eq:egnnmessages}), although the messages are fully non-linear, only the invariant distance feature $\lVert \vx_j - \vx_i \rVert$ is fed into the model. On the other hand, the success of \citep{ 2110.02905,batzner2021se} has demonstrated the effectiveness for integrating sophisticated geometric and physical features in a non-linear way by equivariant function embedding (or second-hop scalar features \citep{klicpera2020directional}). Although implementing geometric features by spherical harmonics and CG tensor products (tensor product with a decomposition into irreducible representations) are manifestly equivariant, this is not the only way. 
In fact, they are just sub-solutions that satisfy theorem 1 of \citet{he2021efficient}.
Different from the existing methods, we propose a flexible and efficient architecture that avoids complex vector-level transformations while preserving complete geometric information through the equivariant frame (see the discussion at the end of Section \ref{motivation} and Section \ref{draft:theory}). More related works on encoding geometric (steerable) features and their relation with our model can be found in remark \ref{Incorporating}.
 \section{Experiments}
\label{section:exp}
We introduce two many-body scenarios to validate the strength of ClofNet on approximating complex geometric information and avoiding direction degeneration: (1) the simulated Newton mechanics systems and (2) the real-world molecular systems.
\begin{table*}[t]
  \caption{MSE for future position prediction over four datasets. The forward time is measured by averaging over multiple batches on an Nvidia Tesla V100 GPU, based on a batch size of 100 samples.}
  \label{table:nbody}
  \centering
  \begin{tabular}{cccccc}
    \toprule
      Model & ES(5) & ES(20) & G+ES(20) & L+ES(20) & Forward time (s)\\
    \midrule
    GNN & 0.0131 & 0.0720	& 0.0721 & 0.0908 & 0.0026\\
    TFN & 0.0236& 0.0794& 0.0845 & 0.1243 & 0.0247\\
    SE3 Transformer & 0.0329 & 0.1349 & 0.1000& 0.1438 & 0.0392\\
    Radial Field & 0.0207 & 0.0377 & 0.0399 & 0.0779 & 0.0040\\
    EGNN & 0.0079 & 0.0128 & 0.0118 & 0.0368 & 0.0067\\
    ClofNet & \textbf{0.0065} & \textbf{0.0073} & \textbf{0.0072} & \textbf{0.0251} & 0.0096\\
    \bottomrule
  \end{tabular}
  \vskip -0.1in
\end{table*}
\subsection{Newtonian many-body system}
\label{draft:newton}
In this experiment, we leverage ClofNet to predict future positions of synthetic many-body systems that are driven by Newtonian force fields, which is a typical equivariant task because applying any rotations or translations on the initial system state leads to
the same transformations on the future system positions.
This task is inspired by \citep{kipf2018neural,fuchs2020se,satorras2021en}, where a 5-body charged system is controlled by the electrostatic force field. Note that the force direction between any two particles is always along the radial direction in the original setting. To validate the effectiveness of ClofNet on modeling arbitrary force directions, we also impose two external force fields into the original system, a gravity field and a Lorentz-like dynamical force field, which can provide more complex and dynamical force directions.

\paragraph{Dataset} We design four systems for three kinds of force fields. (1) The naive charged 5-body system controlled by the electrostatic force field, where each particle carries a positive or negative charge, with initial position and velocity in a 3-dimensional space, i,e., ES(5). (2) Another electrostatic system consists of 20 charged particles, i.e., ES(20). (3) The 20-body system controlled by both the electrostatic field and the gravity field, i.e., G+ES(20). The gravity field is along the z-axis $\vf_{\eta}=(0,\, 0,\, g)$, where the gravitational acceleration $g$ is set to $0.98$. (4) The 20-body system controlled by both the electrostatic field and the Lorentz-like force field, i.e., L+ES(20). The force field is perpendicular to the direction of each particle's velocity: $\vf_{l}(\vv(t)) = q\vv(t) \times \mathbf{B}$, where $q$ is the charge of particles and $\mathbf{B}$ denotes the pseudo-vector of the electromagnetic field. We set $\mathbf{B}$ to $[0.5, 0.5, 0.5]$ here. 

\paragraph{Implementation details}
Following EGNN, for each system, we sample 3,000 trajectories for training, 2,000 for validation and 2,000 for test. 
For each training sample, we provide the initial particle positions $\mX(0)=\{\vx_1(0), \cdots \vx_N(0)\}$, velocities $\mV(0)=\{\vv_1(0), \cdots \vv_N(0)\}$ and associated charges $q=\{q_1, \cdots q_N\} \in \{1, -1\}^5$ as the model input. The prediction label is particle positions after 1,000 timesteps. 
We compare our model ClofNet to a canonical non-equivariant graph neural network (GNN) 
and four equivariant models, Radial Field \cite{kohler2019equivariant}, TFN, SE(3)-Transformer and EGNN that can output equivariant vectors. We implemented ClofNet by taking the same message-passing block as EGNN. All baselines consist of 4 layers with hidden dimension 64 and are trained with AdamW optimizer \citep{loshchilov2017decoupled} via a Mean Squared Error (MSE) loss. The learning rate and training epochs are tuned independently for each model.
Note that we add the reduced centroid back to the predicted positions of ClofNet to achieve translation equivariance. Further details about data generation and model implementation are provided in Appendix \ref{app:newton}. An additional experiment of a non-Markov dynamical system can be found in Appendix \ref{partial}.

\paragraph{Results} 
As shown in Table \ref{table:nbody}, our model significantly outperforms all baselines on all datasets, demonstrating the expressiveness of ClofNet on modeling geometrical systems. More importantly, compared to EGNN, the previous state-of-the-art (SOTA) of the task, ClofNet exhibits stronger modeling capacity on two more challenging force field scenarios, illustrating that ClofNet can effectively model complicated force directions besides the radial direction. Compared with TFN and SE(3) Transformer, the remarkable shorter forward time of ClofNet indicates that ClofNet is also more time-efficient than equivariant models based on spherical harmonics. Since it only processes tensors in the original space, not including any complex equivariant functions. 

\paragraph{Analysis for different number of training samples} Following \citep{satorras2021en}, we also conduct extensive experiments to analyze the performance of ClofNet on datasets in the small and large data regime. The results on the G+ES dataset with sample numbers from 200 to 50,000 are summarized in Figure \ref{fig:sample_complexity}. The proposed ClofNet outperforms another equivariant network EGNN in both small and large data regimes, implying the strength of ClofNet on modeling complex force fields. Compared to GNN, ClofNet exhibits a significant strength in the small data regime and achieves comparable performance in the large data regime, demonstrating that ClofNet is more data-efficient than non-equivariant models in geometric scenarios since the SE(3) equivariance is explicitly encoded into our model. Specifically, ClofNet still keeps competitive performance even with $0.4\%$ (200 of 50,000) training samples, showing its superiority on sample complexity.
\begin{figure}
    \centering
    \includegraphics[width=0.92\linewidth]{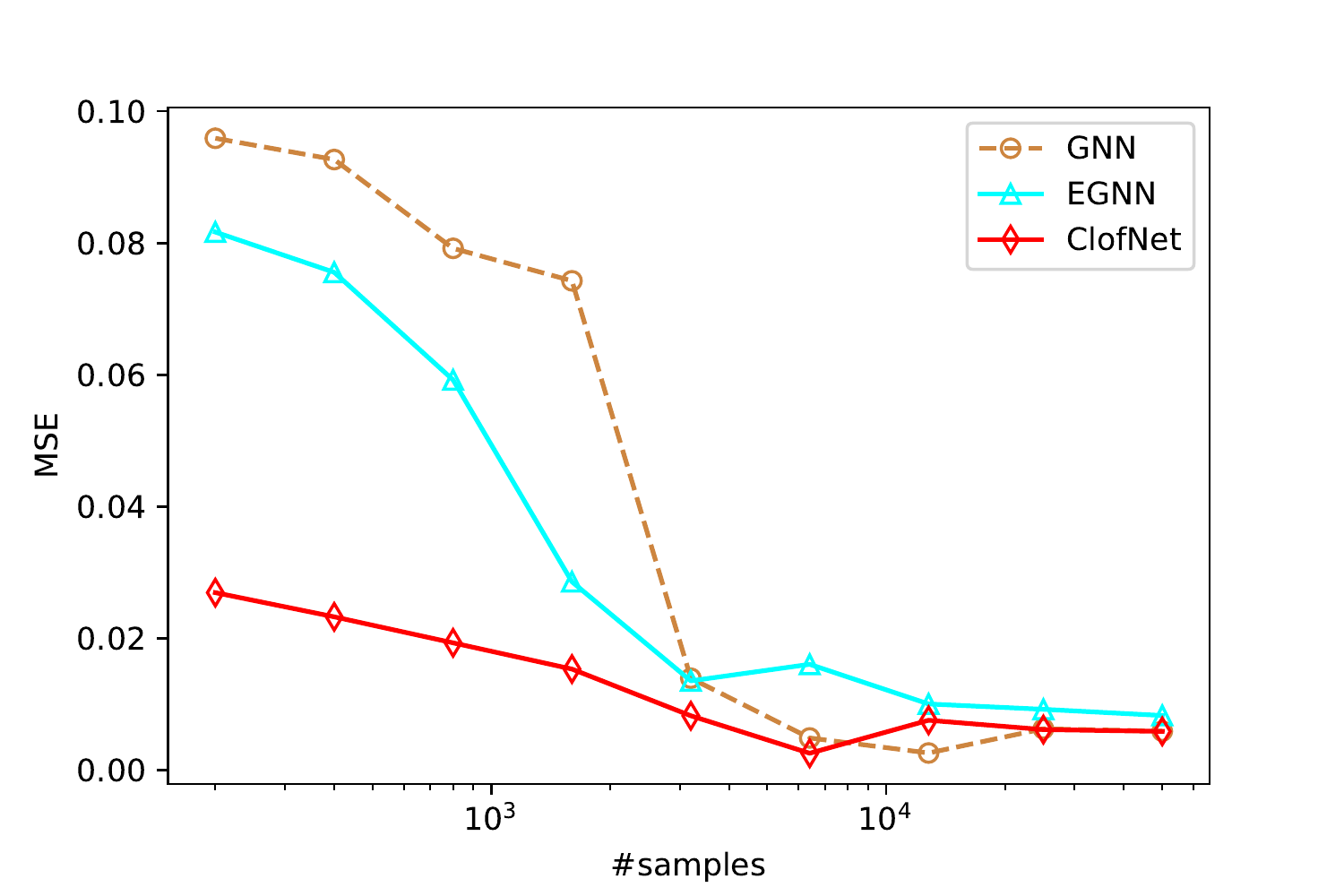}
    \caption{MSE on the G+ES dataset for GNN, EGNN and ClofNet when sweeping over different amounts of training data.}
    \label{fig:sample_complexity}
    \vskip -0.1in
\end{figure}

\subsection{Molecular Systems}
To validate the effectiveness of ClofNet in real-world scenarios, we evaluate ClofNet on a fundamental task in molecular systems called equilibrium conformation generation, trying to predict stable 3D structures from 2D molecular graphs. One of the existing SOTA approaches tackles the problem by directly estimating the gradient fields of the log density of atomic coordinates and then using estimated gradient fields to generate stable structures \citep{shi2021learning, xu2021learning}. The key challenge with this approach is that such gradient fields of atomic coordinates are SE(3) equivariant. 
Following this paradigm, we leverage ClofNet to estimate the gradient fields of atomic coordinates to illustrate its modeling capacity on complex molecular systems.

\paragraph{Evaluation Tasks} We conduct experiments on two tasks: (1) \textit{Conformation Generation} evaluates the capacity of ClofNet to learn the conformation distribution by measuring the diversity and accuracy of generated conformations. Given the RMSD of heavy atoms that measures the distance between generated conformation and the reference, Coverage (COV) and Matching (MAT) scores are defined to measure the diversity and accuracy for a given RMSD threshold $\delta$ respectively.
\begin{equation}
\text{COV}(S_g, S_r) = \frac{1}{|S_r|}|\{R \in S_r|\text{RMSD}(R,\hat{R})<\delta, \hat{R} \in S_g\}|,
\end{equation}
\begin{equation}
\text{MAT}(S_g, S_r) = \frac{1}{|S_r|}\sum_{R \in S_r} \min_{\hat{R} \in S_g} \text{RMSD}(R, \hat{R}),
\end{equation}
where $S_g$ and $S_r$ denote generated and reference conformations, respectively. 

(2) \textit{Distributions Over Distances} evaluates the discrepancy of distance geometry between the generated conformations and the reference conformations. Following \citep{shi2021learning}, we utilize maximum mean discrepancy (MMD) \citep{gretton2012kernel} to measure the discrepancy between the generated distributions and the reference distributions.


\begin{table}[t]
  \caption{COV and MAT scores of different approaches on GEOM-QM9 and GEOM-Drugs datasets. For the COV score, the threshold $\delta$ is set to $0.5 \angstrom$ for QM9 and $1.25 \angstrom$ for Drugs.}
  \label{table:molecular}
  \centering
  \begin{adjustbox}{max width=1\linewidth}
  \begin{tabular}{c|c|cccc}
    \toprule
      \multirow{2}{*}{Dataset} & \multirow{2}{*}{Method} 
      & \multicolumn{2}{c}{COV ($\%$)$\uparrow$} & \multicolumn{2}{c}{MAT ($\angstrom$)$\downarrow$}\\
      & & Mean & Median & Mean & Median\\
    \midrule
    \multirow{6}{*}{QM9} & RDKit & $83.26$ & $90.78$ & $0.3447$ & $0.2935$\\
    & CGCF & $78.05$ & $82.48$& $0.4219$& $0.3900$\\
    & ConfGF & $88.49$ & $94.13$& $0.2673$& $0.2685$\\
    & GeoMol & 71.26 & 72.04 & 0.3730 & 0.3731 \\
    & DGSM & $\textbf{91.49}$ & $\textbf{95.92}$ & $\textbf{0.2139}$ & $\textbf{0.2137}$ \\
    & $\text{EGNN}$ & $80.93$ & $86.27$ & $0.3832$ & $0.3898$ \\
    & ClofNet & $90.21$ & $93.14$ & $0.2430$ & 0.2457 \\
    \midrule
    \multirow{6}{*}{Drugs} & RDKit & $60.91$ & $65.70$ & $1.2026$ & $1.1252$ \\
    & CGCF & $53.96$& $57.06$& $1.2487$& $1.2247$\\
    & ConfGF & $62.15$& $70.93$& $1.1629$& $1.1596$ \\
    & GeoMol & 67.54& 72.71& 1.0325& 1.0580 \\
    & DGSM & 78.73 & 94.39 & 1.0154& 0.9980 \\
    & $\text{EGNN}$ & $40.71$ & $33.01$ & $1.3574$ & $1.3346$ \\
    & ClofNet & $\textbf{88.64}$ & $\textbf{97.56}$ & $\textbf{0.9040}$ & $\textbf{0.9023}$\\
    \bottomrule
  \end{tabular}
  \end{adjustbox}
  \vskip -0.1in
\end{table}

\begin{table}[t]
    \caption{Accuracy of the distributions over distances generated by
different approaches compared to the ground-truth.}
    \label{table:ensemble_prop}
    \centering
    \begin{adjustbox}{max width=1\linewidth}
    \begin{tabular}{c|cccccc}
    \toprule
    \multirow{2}{*}{Method} & \multicolumn{2}{c}{Single} & \multicolumn{2}{c}{Pair} & \multicolumn{2}{c}{All} \\ 
    & Mean & Median & Mean & Median & Mean & Median\\\midrule
    RDKit & 3.4513 & 3.1602 & 3.8452 & 3.6287 & 4.0866 & 3.7519\\
    CGCF & 0.4490 & 0.1786 & 0.5509 & 0.2734 & 0.8703 & 0.4447\\
    ConfGF & 0.3684 & 0.2358 & 0.4582 & 0.3206 & 0.6091 & 0.4240\\\midrule
    ClofNet & $\textbf{0.1317}$ &\textbf{0.0420}&\textbf{0.1787}&\textbf{0.0695}&\textbf{0.3185}& \textbf{0.1142}\\
    \bottomrule
    \end{tabular}
    \end{adjustbox}
\end{table}

\begin{table}[t]
  \caption{Ablations for ClofNet on two datasets.}
  \label{table:ablation}
  \centering
  \begin{adjustbox}{max width=1\linewidth}
  \begin{tabular}{c|c|cccc}
    \toprule
      \multirow{2}{*}{Dataset} & \multirow{2}{*}{Method} 
      & \multicolumn{2}{c}{COV ($\%$)$\uparrow$} & \multicolumn{2}{c}{MAT ($\angstrom$)$\downarrow$}\\
      & & Mean & Median & Mean & Median\\
    \midrule
    \multirow{4}{*}{QM9} &  CN $w/o$ Sca & $88.99$ & $\textbf{94.55}$ & $0.3050$& $0.3066$\\
    & CN $w/o$ GT & $85.21$ & $91.18$ & $0.3057$ & $0.3060$\\
    & ClofNet & $\textbf{90.21}$ & $93.14$ & $\textbf{0.2430}$ & $\textbf{0.2457}$\\
    \midrule
    \multirow{4}{*}{Drugs} &  CN $w/o$ Sca & $65.67$& $75.63$ & $1.1410$ & $1.1132$\\
    & CN $w/o$ GT & $70.57$ & $81.82$& $1.1075$& $1.1004$\\
    & ClofNet & $\textbf{88.64}$ & $\textbf{97.56}$ & $\textbf{0.9040}$ & $\textbf{0.9023}$ \\
    \bottomrule
  \end{tabular}
  \end{adjustbox}
  \vskip -0.1in
\end{table}

\paragraph{Datasets} Following \citep{xu2021learning, shi2021learning}, we evaluate the proposed model on the GEOM-QM9 and GEOM-Drugs datasets \citep{axelrod2020geom} as well as the ISO17 dataset \citep{schutt2017schnet}. The QM9 dataset consists of small molecules up to 29 atoms, while the molecules in the Drugs dataset are generally of medium size, containing up to 181 atoms. ISO17 is for the distance modeling task, which consists of conformations of various molecules with the atomic composition
$\text{C}_7\text{H}_{10}\text{O}_2$.
To keep a fair comparison with the existing SOTA method ConfGF \citep{shi2021learning}, we reproduce its data collection and split settings rigorously. Further details are described in Appendix \ref{app:mcg}. 

\paragraph{Implementation details} For the training procedure, we adopt the same score-based framework as in ConfGF. In detail, we perturb the atomic positions with
random Gaussian noise of various magnitudes and estimate the gradient field by training the noise conditional score network that is constructed based on ClofNet. 
See more details about score-based networks in \citep{shi2021learning,song2019generative} or Appendix \ref{app:mcg}.  The ClofNet is equipped with 4 Graph Transformer blocks and the hidden dimensions are set to 288. All models are trained with Adam optimizer via the score matching loss function (See Appendix \ref{app:mcg}, \ref{score}) for 400 epochs. For each molecule in the test set, we follow ConfGF to sample twice as many conformations as the reference ones from each model. All hyperparameters of the score-based framework are provided in Appendix \ref{app:mcg}.

We compare ClofNet to six classic methods for conformation generation. Specifically, both RDKit \citep{landrum2013rdkit} and CGCF\citep{xu2021learning} are distance-based approaches. ConfGF is most close to us, attempting to generate conformations by learning the gradient field of the data distribution in an equivariant manner. DGSM \citep{luo2021predicting} is another gradient field-based method built upon ConfGF, which proposes a dedicated dynamic graph constructing strategy to model long-range interactions. However, both ConfGF and DGSM only utilize the distance matrix as the geometric input. GeoMol \citep{ganea2021geomol} is a two-stage conformation generation strategy. We also reproduce EGNN on this task as our baseline. The empirical results of all baselines except GeoMol are copied from the original papers since they use the same data split setting as us. We locally reproduce the results of GeoMol on our data split.

\paragraph{Results} For the \textit{Conformation Generation} task,  we summarize the mean and median COV and MAT scores on two benchmarks for all methods. As shown in Table \ref{table:molecular}, ClofNet achieves the best or the second best performance on almost all metrics and datasets, demonstrating the effectiveness of our proposed method. In particular, on the Drugs dataset,
compared with ConfGF which employs a similar learning strategy with us, ClofNet significantly increases $26.5\%$ COV-mean and $26.6\%$ COV-median scores. Notably, ClofNet also achieves better performance than DGSM which leverages a more complex dynamic graph constructing strategy. A potential interpretation is that molecules in Drugs usually contain more atoms and complex chemical functional groups (e.g., Benzene rings) than those of QM9, thus distance-based geometry is not sufficient to model the gradient field of this complex distribution.

ClofNet also achieves significant improvement in the \textit{Distributions Over Distances} task. As shown in Table \ref{table:ensemble_prop}, ClofNet dramatically outperforms the previous SOTA (ConfGF), demonstrating the strong capacity of the proposed model in modeling molecular dynamics data. In particular, ClofNet reduces the MMD by a magnitude in both Single-median and Pair-Median metrics.

\paragraph{Ablations}
Although the superior performance on multiple tasks verifies the effectiveness of ClofNet, it remains unclear whether
the proposed strategies make a critical contribution. 
In light of this, we set up several ablative configurations and list the empirical results in Table \ref{table:ablation}.
For the scalarization block, we conduct a variant of ClofNet without the scalarization block, named CN\textit{ $w/o$} Sca, which only takes the distance matrix of molecules as the input geometric feature. The results show that including scalarization block plays an important role in the model, as the COV-mean and COV-median scores on the Drugs dataset increase by $23.0\%$ and $21.9\%$, respectively, which implies that including all geometric information will boost the performance of the model.
For the graph transformer block, we replace the block with the GIN network \citep{xu2018powerful} that is employed in ConfGF, getting the variant CN\textit{ $w/o$} GT. The results indicate that introducing the attention mechanism will also contribute to the gradient field modeling. We cannot conduct the ablative study for the vectorization block because it guarantees the output of ClofNet is an equivariant vector field.

\section{Conclusion and Future Work}
To express geometric information more efficiently and flexibly,
we develop a novel SE(3) equivariant neural network with complete local frames (ClofNet), aiming at lossless utilization of tensors without incorporating high-dimensional equivariant function embedding. 
With the proposed local frames, ClofNet could cooperate with any graph neural networks without concerns about breaking the equivariance symmetry. Theoretical analyses and extensive empirical results verify the effectiveness of ClofNet. In the future, we will extend our strategy to other (local) symmetry groups.

\section*{Acknowledgements} We thank all the anonymous reviewers for their valuable comments. 
The work was supported partly by the
National Natural Science Foundation of China under Grant 62088102.

\bibliography{reference}
\bibliographystyle{icml2022}

\newpage
\appendix
\newpage
\appendix
\onecolumn
\section{Appendix}
\subsection{The special Eulidean group SE(3)} \label{so3}
The special Euclidean $SE(3)-$group (\citep{marsden2013introduction}
is defined as a semidirect (noncommutative) product of 3D rotations $SO(3)$ and 3D
translations, $SE(3):=SO(3)\rhd \mathbb{R}^{3}$.  An element of $SE(3)$ is a pair $(R,\va)$ where $R\in SO(3)$ and $\va\in \mathbb{R}^{3}.$ 

The action of $SE(3)$ on $\mathbb{R}^{3}$ is the rotation $R$
followed by translation by the vector $a$ and has the expression
\begin{equation*}
(R,\va)\cdot x=Rx+\va.
\end{equation*}
Then the multiplication and inversion operation in $SE(3)$
are given by
\begin{equation*}
(R_1,\va)(R_2,\vb)=(R_1R_2,R_1 \vb+\va)\qquad \text{and\qquad }(R_1,\va)^{-1}=(R_1^{-1},-R_1^{-1}\va),
\end{equation*}
for $R_1,R_2\in SO(3)$ and $\va,\vb\in \mathbb{R}^{3}.$ The identity element is $(I,0)$.

The 3D rotation group $SO(3)$ is defined by $$SO(3)=\{R\in \mathcal{M}_{3\times 3}(\mathbb{R}):R^{T}R=I,\det
(A)=1\}.$$ For example, 
\begin{equation*}
R_{\varphi } =\left[
\begin{array}{ccc}
1 & 0 & 0 \\
0 & \cos \varphi & -\sin \varphi \\
0 & \sin \varphi & \cos \varphi%
\end{array}
\right] 
\end{equation*}
belongs to $SO(3)$, parameterized by $\varphi \in [0,2\pi]$.

\subsection{Torsion force field} \label{torsion}
In this section, we consider a 4-body system $X = (\vx_l,\vx_i,\vx_j,\vx_k)$.
\subsubsection{Dihedral angle} \label{angle}
For a given node $\vx_l$ with three neighbors $\vx_i$, $\vx_j$ and $\vx_k$. The dihedral angle $\theta$ of the plane spanned by $(\vx_l - \vx_i,\vx_l - \vx_j)$ and the plane spanned by $(\vx_l - \vx_i, \vx_l - \vx_k)$ is given by the inner product of normal vectors of the two planes: $$\frac{(\vx_l -\vx_i) \times (\vx_l -\vx_j)}{\norm{(\vx_l -\vx_i) \times (\vx_l -\vx_j)}},$$ and $$\frac{(\vx_l -\vx_i) \times (\vx_l -\vx_k)}{\norm{(\vx_l -\vx_i) \times (\vx_l -\vx_k)}}.$$
That is,
\begin{equation} \label{cos}
\cos \theta: = \frac{(\vx_l -\vx_i) \times (\vx_l -\vx_j) \cdot (\vx_l -\vx_i) \times (\vx_l -\vx_k)}{\norm{(\vx_l -\vx_i) \times (\vx_l -\vx_j)} \norm{(\vx_l -\vx_i) \times (\vx_l -\vx_k)}}.
\end{equation}
Therefore, the dihedral angle is a function of the position vectors $\vx_l$, $\vx_i$, $\vx_j$ and $\vx_k$, which cannot be determined 
purely by the norm of positions: $\norm{\vx_l - \vx_i}$,$\norm{\vx_l - \vx_j}$ and $\norm{\vx_l - \vx_k}$.
\subsubsection{Torsion angle and torsion force}
The torsion angle of $x_j - x_i$ and $x_l -x_k$ with respect to the bond $x_i - x_k$ is just the  dihedral angle $\theta$ between the plane spanned by $(\vx_l - \vx_i,\vx_l - \vx_j)$ and the plane spanned by $(\vx_l - \vx_i, \vx_l - \vx_k)$. By Fourier series, the torsional energy has the following form:
$$E_{tor}(\theta):= \sum_n V_n \cos{n \theta}.$$
By cutting off the high frequency part ($n \leq 2$), we can simulate an effective torsional energy. Then the torsion force of a particle $\vx$ is given by the gradient vector:
$$F(\vx) = - \frac{\partial E_{tor}(\theta)}{\partial \vx},$$
where $\cos{\theta}$ is defined by (\ref{cos}). 
\subsection{Additional descriptions on blocks of ClofNet}
\subsubsection{Graph message-passing Block}
\label{app:ga_b}

In this section, we provide the implementation details of the Graph Message-passing Block (GMP) that is utilized in ClofNet.
For a many-body system $\mX$, we first represent it as a spatial graph and utilize a message-passing based network to learn the SO(3)-invariant edgewise embeddings $m_{ij}$ from the graph. Considering that there does not exist any graph topology in most real-world scenarios, we design a Graph Transformer Block (GTB) that leverages the attention mechanism due to its powerful capacity in learning the correlations between inter-instances \citep{vaswani2017attention}. Now we discuss the workflow of the naive message-passing block and the GTB block. Note that here we omit the time index of the SO(3)-invariant scalars for the simplicity of description.
\paragraph{Feature embedding}
Given geometric scalars $s_{ij}$, node features $h_i$ and edge features $m_{ij}$, GMP first embeds them into high-dimensional representations: 
\begin{gather}
    h_i = \mathrm{MLP}(h_i),\\
    e_{ij} = \mathrm{MLP}(e_{ij}), \\
    s_{ij} = \mathrm{Fourier}(s_{ij}) \text{ or } \mathrm{MLP}(s_{ij}),\\
\end{gather}
where $\mathrm{MLP}$ denotes a fully connected network and $\mathrm{Fourier}$ denotes a Fourier transformation with a tuple of learnable frequencies. 

\paragraph{Naive message-passing Block} \label{naive} Following \citep{satorras2021en}, each message-passing block takes as input the set of node positions $\vx^l$, node embeddings $h^l$, SO(3)-invariant scalars $s_{ij}$ and edge information $e_{ij}$ and outputs a transformation on $h^{l+1}$ and $x^{l+1}$. The equations for each message-passing layer are defined as follows:
\begin{gather}
    m_{ij} = \phi_m (s_{ij}, h_i^l, h_j^l, e_{ij}), \\
    h_i^{l+1} = \phi_h  (h_i^{l},\sum_{j \in \mathcal{N}(i)} m_{ij}), \\
    \mathcal{F}_{ij}^{l+1}=\mathbf{EquiFrame}(\vx_i^l, \vx_j^l), \\
     \vx_i^{l+1} = \vx_i^{l} + \frac{1}{N}\sum_{j \in \mathcal{N}(\vx_i)}{\mathbf{Vectorize}(m_{ij}^l, \mathcal{F}_{ij}^l)},
\end{gather}
where $\phi_m$, $\phi_k$ and $\phi_h$ are $\mathrm{MLP}$s with different parameters. Comparing with (\ref{eq:egnnmessages}), the naive message-passing algorithm can also be seen as EGNN \citep{satorras2021en} equipped with our equivariant local frames.

\paragraph{Graph Transformer Block} \label{attention}
The overall architecture of our GTB block is inspired by \citep{shi2020masked}. For each GTB block, we first compute the edgewise message with $\phi_m$ (i.e., Eq. (17), then leverage the message embeddings and node embeddings with a transformer encoder-like architecture to refine the node embeddings. After that, we update edgewise messages with a residue block. In practice, we replace Eq. (18) with the following operations:
\begin{gather}
    q_{i} = \phi_q (h_i^l), k_{ij} = \phi_k (h_i^l, m_{ij}), v_{ij} = \phi_v (m_{ij}), \\
    \alpha_{ij} = \frac{\left<q_{i}, k_{ij}\right>}{\sum_{j' \in \mathcal{N}(i)}{\left<q_{i}, k_{ij'}\right>}}, \\
    \mathcal{M}_i = \mathrm{LayerNorm}(\sum_{j \in \mathcal{N}(i)}{\alpha_{ij}v_{ij}}), \\
    h_i^{l+1} = \phi_h (h_i^{l}, \mathcal{M}_i), \\
    h_i^{l+1} = h_i^{l} + \mathrm{LayerNorm}(h_i^{l+1}),
\end{gather}
where $\phi_q$, $\phi_k$ and $\phi_v$ are linear projection layers. $\phi_h$ is an $\mathrm{MLP}$. $\alpha_{ij}$ and $\mathcal{M}_i$ denote the attention weights and the refined nodewise embeddings, respectively. $\mathrm{LayerNorm}$ refers to the normalization layer adopted in \citep{vaswani2017attention}.

\subsubsection{Learning gradient fields as score-based models} \label{evo}
Molecular conformation generation adopts an SE(3)-equivariant vector field(i.e., gradient field) to depict its dynamic evolving mechanisms.
\paragraph{Gradient field} Gradient field is a widely used terminology meaning the first-order derivative w.r.t. a scalar function \citep{jost2008riemannian,song2019generative}. To generate molecular conformations (i.e. equilibrium state) with a single stage, \citep{shi2021learning} define ``gradient field'' to serve as pseudo-forces acting on each particle. By evolving the particles following the direction of the gradient field, the non-equilibrium system will finally converge to an equilibrium state. 

\paragraph{Equivariant score function and the Evolving block}
As to the statistical ensemble system, we try to predict the reverse evolving process from a random state to equilibrium by integrating a first-order equivariant vector field. For example, all physical-allowable molecule conformations are located in an equilibrium state determined by the energy function. Suppose the forward process  from equilibrium to non-equilibrium of the system satisfies:
\begin{equation} \label{forward}
d\mX(t) = f(\mX(t),t)dt + g(t)dW_t,  \ \ 0 \leq t \leq T,\end{equation}
where $W_t$ is the Brownian motion and the initial state $\mX(0)$ follows an unknown equilibrium distribution $p_0$. Denote the marginal distribution at time $t$ by $p_t$, then we can write $p_t$ in Gibbs distribution form:
$$p_t(\mX(t)) = \exp\{-\beta \mH_t(\mX(t))\}.$$
It implies that the Hamiltonian function $\mH_t$ at time $t$ is entangled with $p_0$.
 According to \cite{song2020score}), the reverse evolving process satisfies the following Langevin dynamics:
\begin{equation} \label{langevin}
d\mX(T-t) = f(\mX(T-t),T-t)dt - g^2(T-t)\nabla \mH_{T-t}(\mX_{T-t})dt + g(T-t)dB_t,\end{equation}
where $B_t$ denotes the standard Brownian motion.
The gradient field of the Hamiltonian function $\nabla \mH_t(x)$ is also called the \textbf{force field}, or the score function \citep{song2019generative}. Therefore, ClofNet $\phi$ is implemented to model the score function:
: $$\nabla \mH_{T-t}(\mX_{t}) = \phi(\mX(t)),\ \ 0 \leq t \leq T.$$
Note that the vector-valued function $f(x,t)$ and the scalar function $g(t)$ is set to be fixed as prior knowledge. 
In \cite{shi2021learning} and our molecular experiment, we use the discretization of VE SDE \citep{song2019generative,song2020score}, where $f\equiv 0$ and
$$g(t) = \sqrt{\frac{[d\sigma^2(t)]}{dt}}.$$
Since the distribution Brownian motion is SO(3)-invariant and $g(t)$ is chosen to be an SO(3)-invariant scalar function, then by (\ref{forward}), the forward process is SO(3)-invariant. Combining with the fact that the initial distribution $p_0$ (molecular conformation distribution) is SO(3)-invariant, it's easy to derive that the score function is a gradient of a \textbf{SO(3)-invariant} function, which means that the score function is \textbf{SO(3)-equivariant}. 

Finally, the learned score function is connected to the evolving block for numerical integrating the Langevin diffusion (\ref{langevin}).
\textbf{For a fair comparison}, ClofNet and \citet{shi2021learning} use the same annealed Langevin dynamics sampling algorithm \citep{song2019generative,shi2021learning}. 

\subsubsection{Scalability} \label{SCALABILITY}
Compared to other efficient equivariant models like EGNN \citep{satorras2021en,satorras2021nn}, the extra computational cost of ClofNet is to calculate the values of the scalars in $t_{ij}, \forall i,j$ through the scalarization Block. Let $E$ denotes the number
of edges in the graph, then the cost of calculating the scalars (along with the frames) is $\mathcal{O}(3 \times3 \times E)$, which is of the same order as the cost of performing 1-hop message-passing. Therefore, the extra cost is much less, Compared to the
computational cost of back-propagation in the neural networks. For example, for the molecular conformation generation task, transforming the
tensors into SO(3)-invariant scalars only brings 9.6$\%$ extra real-time computational cost and 17.4$\%$ extra memory cost.

\subsection{Proof and discussions} 
\label{app:proof}
\begin{proposition}
\label{exp}
The complete frame $(\va^t,\vb^t,\vc^t)$ defined by (\ref{basis}) is equivariant under SO(3) transformation of the spatial space.    
\end{proposition}
\begin{proof}
Let $g \in SO(3)$, then under the action of $g$, the positions of the many-body system $X(t)$ transform equivariantly:
$$(\vx_1(t),\dots,\vx_n(t)) \xrightarrow{g} (g\vx_1(t),\dots,g\vx_n(t)).$$
From the definition of $\va^t$, we know that
$$\va^t \xrightarrow{g} g\va^t.$$ 
For $\vb^t$, since
\begin{align} \label{cr}
(g\vx_i(t))\times (g\vx_j(t)) & = \det (g) (g^T)^{-1} ( \vx_i(t) \times \vx_j(t)) \\
& = g(\vx_i(t) \times \vx_j(t)),
\end{align}
where we have used $g^{-1} = g^T$ for orthogonal matrix $g$ to get the last line. Therefore, $\vb^t \xrightarrow{g} g\vb^t.$ Applying (\ref{cr}) once again, we have  $\vc^t \xrightarrow{g} g\vc^t.$
\end{proof}
\paragraph{Frame bundle and scalarization technique}

A frame at $x \in M$ is a linear isomorphism $u $ from $\mathbb{R}^d$ to the tangent space at $x$:$T_x M$. We use $F(M)_x$ to denote the space of all frames at $x$. Then $GL(d,\mathbb{R})$ acts on 
$F(M)_x$ by
$$\mathbb{R}^d \xrightarrow{g} \mathbb{R}^d \xrightarrow{u} T_x M.$$
Then the frame bundle
$$F(M) = \cup_{x \in M} F(M)_x$$
can be made into a differential manifold. From the principle bundle point of view, each differential manifold $M$ is a quotient of its frame bundle $F(M)$ by the general linear group $GL(d,\mathbb{R})$: $M=F(M)/GL(d,\mathbb{R})$.
We denote the quotient map by $\pi$: 
$F(M) \xrightarrow{\pi}  M.$
Then, each point of $\vu \in F(M)$ is a reference frame located at $\vx: = \pi(u) \in M$. On the other hand, $\mathbb{R}^d$ can be seen as a $d$-dimensional differential manifold with a global coordinates chart and  SO(3) is the structure-preserving group of the Euclidean metric with a fixed orientation.  

Following \citet{hsu2002stochastic}, let $\{e_i,\ 1 \leq i \leq d\}$ be the canonical frame of $\mathbb{R}^d$, and $\{e^i\}$ the corresponding dual frame. At each frame $u$, the vectors $Y_i : = u e_i$ form a frame of $T_x M$. Let $\{Y^i\}$ be the dual frame of $T_x^* M$, then a (r,s)-tensor $\theta$ can be expressed as
\begin{equation} \label{tensor field}
\theta = \theta^{i_1 \cdots i_r}_{j_1 \cdots j_s} Y_{i_1}\otimes \cdots \otimes Y_{i_r} \otimes Y^{j_1}\otimes \cdots \otimes Y^{j_s}.\end{equation}
The \textbf{scalarization} of $\theta$ at $u$ is
\begin{equation}
\label{scal}
\Tilde{\theta}(u) : = \theta^{i_1 \cdots i_r}_{j_1 \cdots j_s} e_{i_1}\otimes \cdots \otimes e_{i_r}\otimes e^{j_1}\otimes \cdots \otimes e^{j_s}.\end{equation}
Through scalarization, a tensor field $\theta$ becomes an ordinary vector space valued function on $F(M)$:
$$\Tilde{\theta} : F(M) \rightarrow \mathbb{R}^r \otimes \mathbb{R}^s.$$
Therefore, geometric operations such as covariant derivative and tensor product on manifolds can be realized as directional derivative and tensor product of ordinary vector spaces \citep{hsu2002stochastic}. 
\begin{proposition} \label{corr}
There is a one-to-one correspondence between the scalarization of tensor fields on the orthonormal frame bundle $O(\mathbb{R}^3)$ (\ref{scal}) and the SO(3)-invariant scalars tuple obtained by the scalarization block (\ref{scalar}) under an equivariant frame.
\end{proposition}
\begin{proof}
Since we are working in $\mathbb{R}^3$ with a fixed orientation, the $GL(3,\mathbb{R})$ group action is reduced to $SO(3)$ global group action acting on the orthonormal frame bundle $O(\mathbb{R}^3)$. A frame $u \in O(\mathbb{R}^3)$ at $\pi (u) \in \mathbb{R}^3$ can be transported to another point in $\mathbb{R}^3$ by translation. Therefore, we neglect the origin of the frame in the proof. Let $u_e =(\ve_1,\ve_2,\ve_3)$ be an equivariant frame of $\mathbb{R}^3$, then given a scalars tuple $\{\theta^{i_1,\dots,i_r}\}$, we construct a vector-valued function on $O(\mathbb{R}^3)$ by:
$$\Tilde{\theta}^{i_1,\dots,i_r}(u) = g_{i_1i_1'}\cdots g_{i_ri_r'}\theta^{i_1',\dots,i_r'},$$
where $g$ is the $SO(3)$ transformation from $u_e$ to another frame $u$. Moreover, $\Tilde{\theta}^{i_1,\dots,i_r}(u)$ is SO(3)-equivariant, since 
$$\Tilde{\theta}(gu) = g\Tilde{\theta}(u),$$
where the $g$ on the right side means the usual extension of the action of SO(3) from $\mathbb{R}$ to the tensor space $\mathbb{R}^{\otimes r}$. We have constructed the $(r,\,0)$ tensor field from the scalars tuple. It's easy to check that $\Tilde{\theta}^{i_1,\dots,i_r}(u)$ induces a $(r,\,0)$ tensor field on $\mathbb{R}^3$ by following the definition of (\ref{ten}). 

From scalarization $\Tilde{\theta}^{i_1,\dots,i_r}(u)$ on $O(\mathbb{R}^3)$ to scalarization is obvious. Note that any equivariant frame $u_e$ is also a point on $O(\mathbb{R}^3)$, therefore the scalars tuple is just the values of $\Tilde{\theta}^{i_1,\dots,i_r}$ at $u_e$:
$$\theta^{i_1,\dots,i_r} = \Tilde{\theta}^{i_1,\dots,i_r}(u_e).$$
For general $(r,\,s)$-type tensors, the proof is the same by adding the dual frame of $u_e$.
\end{proof}
\subsubsection{Permutation and translation equivariance} \label{symmetry}
For a many-body system $\mX(t) = (\vx_1(t),\dots,\vx_n(t))$, the centroid $C(t)$ is defined by
$$C(t) = \frac{\vx_1(t) + \cdots \vx_n(t)}{n}.$$
Translating the reference by $\vh$, then
$$X(t) + h \rightarrow C(t) + h.$$
Therefore, recentering the reference's origin to the centroid at the input's time $t=0$, we have
$$\mX(t) - C(0) \xrightarrow[]{\text{translation by }\ \vh\ \text{at}\ t=0} \mX(t) - C(0).$$
That is, the system is translation-invariant under the re-centered reference if the translation is done at the input's time $t=0$, which is exactly the scenario considered in predicting the future trajectory or state. Note that although most tensors are \textbf{translation-invariant} (e.g., mass,velocities, Gravitational acceleration, gradient fields), the future positions of the many-body system should be \textbf{translation-equivariant}. Therefore, we add $\vc(0)$ back to the output of the predicted positions. 

Permutation equivariance is automatically guaranteed by all \textbf{graph-based} algorithms, because isomorphic graphs are obliged to yield isomorphic outputs. Therefore, we will just briefly mention why the common 1-hop message-passing algorithms are Permutation equivariant. To emphasis the permutation symmetry, we quote a conclusion from \citep{zaheer2017deep} (which is derived from the Kolmogorov–Arnold representation theorem): if $f(\vx_1,\dots,\vx_n)$ is a permutation invariant multivariate continuous function, then 
\begin{equation} \label{agg}
    f(\vx_1,\dots,\vx_n) = g(\sum_{i=1}^n \phi(\vx_i)).
\end{equation}

The crucial point is the function $\phi$ is shared among all the points, which exactly fits the definition of the so-called message-passing scheme.
For a many-body system $X$, let $v=(v_1, \dots v_n)$ be its vector field, then $v_i \in \mathbb{R}^3$ corresponds the equivariant vector for $\vx_i \in \{\vx_1,\dots,\vx_n\}$. Denote ClofNet with parameters $\theta$ by $\Phi^{\theta,t}=\{\Phi_i^{\theta}\}_{i=1}^n$, then 
$$v_i(t) = \Phi_i^{\theta}(\mX(t),t),$$
for a fixed particle $\vx_i$. 
Suppose $(i_1,\dots,i_k)$ (neighbors of $\vx_i$) indicate indexes of particles which have interaction with $\vx_i$, then obviously $1 \leq k \leq n-1$. By (\ref{agg}), $\Phi_i(\mX(t),t)$ is an aggregation of message from $\vx_i$'s neighbors, therefore we have:
\begin{equation} \label{vag}
\Phi_i(X(t),t) = \frac{1}{k}\sum_{j=1}^k \phi(\vx_i(t), \vx_{i_j}(t),t),\end{equation}
and $\phi$ is a SO(3)-equivariant network with vector output. Note that (\ref{vag}) performs aggregation at the level of vectors, therefore we choose $g$ in (\ref{agg}) to be the arithmetic mean to preserve SO(3) symmetry. A shared neural network $\phi$ guarantees
the permutation equivariance of ClofNet. 
\subsubsection{Equivariance of ClofNet} \label{equi}
We need the following lemma:
\begin{lemma} \label{lem}
Suppose $h$ is an invariant function, $f_1, \dots, f_k$ are arbitrary nonlinear functions. Then, the composition $f_k \circ \cdots \circ f_1 \circ h$ is an invariant function. 
\end{lemma}
\begin{proof}
Consider the group action $g$ acting on $x \in \mathbb{R}^3$, then $h$ is invariant means that $h \circ g (x) := h(gx) = h(x)$. We have
$$f_k  \circ \cdots \circ f_1 \circ h \circ g = f_k  \circ \cdots \circ f_1 \circ h,$$
for every group action $g$. Hence the composition $f_k \circ \cdots \circ f_1 \circ h $ is invariant.
\end{proof}
\paragraph{Proof of proposition \ref{Equivariance}}
Let $f_1$ be the scalarization block. Because the output of $f_1$ are scalars (from (3.3)) and SO(3)-scalars must be invariant under SO(3) group action, we can conclude that $f_1$ is invariant. Denote each graph-transformer layer by $f_i,\ i \in \{2,\dots,k \}$, then by Lemma \ref{lem}, we can conclude that the output of the graph-transformer block is also invariant. \\
Finally, from the standard fact that scalars multiply vectors yield equivariant vectors, we conclude that the output of the vectorization block is SO(3)-equivariant.

Combining with the translation equivariance in \ref{symmetry}, our model is SE(3)-equivariant. 

\begin{remark} \label{re}
In this remark, we investigate how the complete local frame (\ref{basis}) transforms under central reflection $R: \vx \rightarrow - \vx$. First, we can derive $\va \rightarrow - \va$ from the reflection operation. Second, by the right-hand-thread rule, the cross product of two equivariant vectors yields a pseudo-vector:
$\vb = \vx_i \times \vx_j \rightarrow \vb.$
Then it is easy to imply that $\vc \rightarrow - \vc$. Due to $\det{(-\va,\vb,-\vc)} = 1$, the orientation of our local frame under reflection $R$ remains unchanged. Thant means, ClofNet is central reflection anti-symmetric. 
Considering that other reflections including mirror reflection alone a plane can be generated by rotation and central reflection, we can derive that ClofNet is reflection anti-symmetric. 

\end{remark}
\subsection{Information loss} \label{Information loss}
Denote the prediction target by $Y$, then by the information bottleneck principle (equation (3) of \cite{tishby2015deep}), a learning problem can be formulated as finding a minimal sufficient statistics of the input $X$ with respect to $Y$. In mathematical terminology, we have a Markov chain:
$$Y\rightarrow X \rightarrow \hat{X}.$$
$\hat{X}$ is a filtration of the original output $X$ according to a fixed target $Y$, therefore an information lossless model should satisfy the following equality:
\begin{equation} \label{bottle}
I(Y;X) = I(Y,\hat{X}),
\end{equation}
where $I(\cdot | \cdot)$ denotes the mutual information between two random variables. By the data processing inequality, (\ref{bottle}) is satisfied if and only if $I(X,Y|\hat{X}) = 0$ and $Y\rightarrow \hat{X} \rightarrow X$ also forms a Markov chain. 

Consider a toy 4-body example, a particle denoted by $x_l$ with three neighbor particles $x_i$, $x_j$ and $x_k$. Suppose the force undertook by $x_l$ is a function of $\theta$: $F_l(\theta)$, where $\theta$ is the dihedral angle  between the plane spanned by $(\vx_l - \vx_i,\vx_l - \vx_j)$ and the plane spanned by $(\vx_l - \vx_i, \vx_l - \vx_k)$. From the definition, $\cos \theta$ is given by the inner product of normal vectors of the two planes: \begin{equation} \label{1}
\vn_1 : = \frac{(\vx_l -\vx_i) \times (\vx_l -\vx_j)}{\norm{(\vx_l -\vx_i) \times (\vx_l -\vx_j)}},\end{equation} and \begin{equation} \label{2}\vn_2 : = \frac{(\vx_l -\vx_i) \times (\vx_l -\vx_k)}{\norm{(\vx_l -\vx_i) \times (\vx_l -\vx_k)}}.\end{equation}
Therefore, the dihedral angle is a function of the position vectors $\vx_l$, $\vx_i$, $\vx_j$ and $\vx_k$, which cannot be determined 
purely by the positions' norm: $\norm{\vx_l - \vx_i}$,$\norm{\vx_l - \vx_j}$ and $\norm{\vx_l - \vx_k}$.

Suppose we know the coordinates of particles: $X: = (x_l,x_i,x_j,x_k)$, and we want to predict $Y = F_l(\theta)$. Then from the definition of dihedral angle, obviously $Y = F_l(\vx_l - \vx_i, \vx_l - \vx_j, \vx_l - \vx_k)$. Now a natural question to propose is how to decide $\hat{X}$, which satisfies (\ref{bottle}). In EGNN, $\hat{X}_{\text{EGNN}} = (\norm{\vx_l - \vx_i},\norm{\vx_l - \vx_j},\norm{\vx_l - \vx_k})$, then $Y$ and $X$ are not independent conditioned on $\hat{X}$, since $\theta$ is not a function of $\hat{X}$, but a function of $X$. That is,
$$I(Y,X | \hat{X}) \neq 0 \ \ \text{and}\ \ I(Y,X) > I(Y, \hat{X}).$$
Therefore, certain information is lost when transforming the original input $X$  to the input of EGNN:$\hat{X}_{\text{EGNN}}$.

On the other hand, ClofNet transforms $X= (\vx_l - \vx_i,\vx_l - \vx_j,\vx_l - \vx_k)$ to $\hat{X}_{\text{ClofNet}} : = f(X) := (\bar{\vx_l} - \bar{\vx_i},\bar{\vx_l} - \bar{\vx_j},\bar{\vx_l} - \bar{\vx_k})$, where $\bar{\vx_i}$ is the coordinate of $x_i$ under the equivariant frame of ClofNet ( $f$ corresponds to the sclarization, and $\bar{\vx_i} \in \mathcal{S}(X)$ defined in the scalarization section). Since this transformation $f$ is \textbf{invertible} (see (\ref{invert})),
it's obvious that
$$Y\rightarrow \hat{X} \rightarrow X \equiv f^{-1}(\hat{X})$$
forms a Markov chain, and
\begin{align*}
  I(X,Y|\hat{X}) &=  I(f^{-1}(\hat{X}),Y |\hat{X})=0.
\end{align*}
Here we use the fact that $f^{-1}(\hat{X})$ conditioned on $\hat{X}$ is deterministic, therefore $f^{-1}(\hat{X})$ and $Y$ are mutually independent conditioned on $\hat{X}$.
Then we conclude that (\ref{bottle}) is satisfied: 
$$I(Y,X) = I(Y,\hat{X}_{\text{ClofNet}}).$$
It implies no information is lost during the transformation from the original input to ClofNet's input.

\subsection{Universal approximation}\label{univer}
A crucial ingredient for proving the universal approximation property of a model is to apply Stone-Weierstrass theorem in a proper space. This is nontrivial if the function class we are considering has symmetry. For functions defined on a graph, we want them to be equivariant with respect to permutations on the nodes. Therefore, \citet{maehara2019simple} introduces the graph space as the space of equivariant classes of isomorphism graphs. Additional difficulty arises when SE(3)-symmetry is involved. By moving the graph's center, we only need to consider the SO(3)-symmetry acting on each node's feature globally. Now we discuss two graph network structures that can achieve universality. Both of them can be seen as a realization of equivariant polynomials neural networks.

\paragraph{Minimal universal architecture with equivariant frame} \label{app:mini}
In section 4 of \citep{dym2020universality}, the authors proposed an equivariant network that can achieve SE(3) equivariance by tensor representations. We briefly sketch the general approach of \citep{dym2020universality}.

Roughly speaking, a function class $\mathbf{F}_C(\Ffeat,\Fpool)$  with universal approximation property can be decomposed into two conditions:
\begin{itemize}
    \item \textbf{Condition 1.} $\Ffeat$ is able to represent all polynomials which are translation invariant and permutation-equivariant;
    \item \textbf{Condition 2.} $\Fpool$ contains all linear SO(3)-equivariant mappings between the range of $\Ffeat$:$\Wfeat$, and the output tensor space.
\end{itemize}
Furthermore, \citep{dym2020universality} showed that for a non-negative integer vector $\vec{r}=(r_1,\dots,r_k)$,  $Q^{(\vec{r})}=\{(Q_j^{(\vec{r})\}})_{j=1}^n $ defined by
\begin{equation}\label{Q}
Q_j^{(\vec{r})}(\mX)=\sum_{i_2,\ldots,i_K=1}^n \vx_j^{\otimes r_1}\otimes \vx_{i_2}^{\otimes r_2} \otimes  \vx_{i_3}^{\otimes r_3}\otimes \ldots \otimes  \vx_{i_K}^{\otimes r_K}.
\end{equation}
are SO(3)-equivariant and satisfy condition 1. \textbf{In fact, they are (r,0)-type tensors, and transform equivariantly by definition (\ref{ten})}. Here, each $x_i$ is already centralized and there is no need to centralize them as (6) of \citep{dym2020universality}. To express these tensor products, \citet{dym2020universality} designed a \textbf{parameterized minimal universal architecture neural layer} ((7) of \citep{dym2020universality}) from input $(\mX,\vv)$ to the output $(\mX,\Tilde{\vv})$:
$$\Tilde{\vv}_j(\mX, \vv | \theta_1 , \theta_2) = \theta_1 \vx_j \otimes \vv_j + \theta_2 \sum_i \vx_i \otimes \vv_i,$$
where $\vv_i$ is the input tensor feature of the body position $\vx_i$.
Now it's necessary to show how ClofNet performs direct tensor product under scalarization. We take a simple example of the inner product between two vectors: $\pmb{\chi}_1 = \chi_1^a \va + \chi_1^b \vb + \chi_1^c \vc$ and  $\pmb{\chi}_2 = \chi_2^a \va + \chi_2^b \vb + \chi_2^c \vc$, under equivariant frame $(\va,\vb,\vc)$. Then the tensor product gives:
\begin{align} \nonumber
 \pmb{\chi}_1 \otimes \pmb{\chi}_2 =&  \chi_1^a  \cdot \chi_2^a \va \otimes \va +  \chi_1^b  \cdot \chi_2^b \vb \otimes \vb + \chi_1^c  \cdot \chi_2^c \vc \otimes \vc\\\nonumber
 &+ \chi_1^a  \cdot \chi_2^b \va \otimes \vb + \chi_1^b  \cdot \chi_2^a \vb \otimes \va\\ \nonumber
& + \chi_1^a  \cdot \chi_2^c \va \otimes \vc + \chi_1^c  \cdot \chi_2^a \vc \otimes \va\\ \label{product}
 &+ \chi_1^b  \cdot \chi_2^c \vb \otimes \vc + \chi_1^c  \cdot \chi_2^b \vc \otimes \vb.
\end{align}
Under scalarization, the output of this tensor product is represented by the set containing second order polynomials of the invariant scalars ($\{\chi_1^a \dots \chi_2^c\}$) defined by (\ref{scalar}). 

Now we are ready to present the \textbf{parameterized minimal universal architecture neural layer with equivariant frames}. After scalarization, denote the invariant scalars of $\{\vx_i\}_{i=1}^n$ and $\{\vv_i\}_{i=1}^n$ by $S^j(\vx_i) := \{S^j(\vx_i)\}_{j=1}^3$ and $S^j(\vx_i) := \{S^j(\vv_i)\}_{j=1}^m$, where $j$ is the coefficients index of $\vv_i$. Then each layer maps $(\mX,\vv)$ to the output $(\mX,\Tilde{v})$:
\begin{equation} \label{stupid}
(\mX,\vv) \xrightarrow{\text{Scalarization}} (S(\mX),S(\vv)) \longrightarrow \Tilde{v} := \Tilde{v}(S(\mX),S(\vv) | \theta_1, \theta_2),\end{equation}
and $\Tilde{v}(S(\mX),S(\vv) | \theta_1, \theta_2)$ is given by
$$\Tilde{v}_i^{jk}(S(\mX),S(\vv) | \theta_1, \theta_2) = \theta_1 S^j(\vx_i) \cdot S^k(\vv_i) + \theta_2 \sum_l S^j(\vx_l) \cdot S^k(\vv_l),$$
where $jk$ denotes the coefficients index of $\Tilde{v}$ (each operation will bring an extra index for the features). Note that $\Tilde{v}$ is a tuple of invariant scalars, and it will connect to the vectorization block when the final output is a higher-order tensor. 
\begin{proof}[\textbf{Proof of theorem \ref{universal approximation}}]
Suppose our target is to predict an equivariant tensor associated with a pre-fixed node with input the many-body system $\mX=(\vx_1,\dots,\vx_N) \in \mathbb{R}^{N\times 3}$. Then we fix a single equivariant frame on the pre-fixed node by averaging our edgewise frames(bases) around this node. Under this frame, we \textbf{scalarize $\{\vx_i\}_{i=1}^n$ of all nodes (not only the 1-hop nodes)}.
To relate the minimal universal architecture with ClofNet, we emphasise that formula (\ref{product}) already shows that with the equivariant frame, \textbf{tensor products are ordinary multiplication of invariant scalars.} From a commutative diagram point of view, let $\vf{f}(\pmb{\chi}_1,\pmb{\chi}_2)$ denote the tensor product on two equivariant vectors $\pmb{\chi}_1 = \chi_1^a \va + \chi_1^b \vb + \chi_1^c \vc$ and  $\pmb{\chi}_2 = \chi_2^a \va + \chi_2^b \vb + \chi_2^c \vc$ under equivariant frame $(\va,\vb,\vc)$ and denote the scalarization of $\pmb{\chi}_1$ and $\pmb{\chi}_2$ by $s_1$ and $s_2$, then we have the following commutative diagram:
\begin{tikzcd} \label{commutative2}
     \pmb{\chi}_1, \pmb{\chi}_2 \arrow[r,  "\vf"] \arrow[d, "\text{Scalarize}"]
      & \mathbf{\theta}_1 \otimes \mathbf{\theta}_2  \\
     s_1,s_2 \arrow[r, "\Tilde{f}"]
      & \Tilde{f}(s_1,s_2) \arrow[u,  "\text{Vectorize}"'],
\end{tikzcd}
and $\text{Vecterize} \circ \Tilde{f}$ is exactly the right hand side of (\ref{product}), where $\Tilde(s_1,s_2)$ is a combination of second order polynomials of $s_1$, $s_2$. Here, the Vectorize block is under the second-order equivariant frame built from $(\va,\vb,\vc)$: $\{\va \otimes \va, \dots, \vb \otimes \vc \}$.
\footnote{\textbf{Recall from our construction of the frame, vector $\va$ denotes the radial direction. Then if we perform tensor product only along this direction, the results won't expand the whole space of the second-order tensors, the direction degeneration problem would become critical.}}The same story holds for tensor product of higher orders, and we conclude that an neural network which can approximate permutation-equivariant polynomials universally (not necessary (\ref{stupid}))  meets condition 1 in the ClofNet setting.

In terms of condition 2, \citet{dym2020universality} gave an explicit construction in the appendix of all equivariant linear functionals when the output is a scalar. It's not a coincidence that both the equivariant frame (\ref{basis}) and (23) of \citep{dym2020universality} used the inner product and determinant (equivalent to the cross product). However, with our equivariant frame, we can give a straightforward argument on building equivariant functionals.

By the commutative diagram (\ref{commutative2}), both the input and output tensor space are scalarized under the equivariant frame $(\va, \vb, \vc)$ (and the tensor product of equivariant frame). Furthermore, a \textbf{linear} functional $\phi$ is fully determined by its value on the frame. Suppose we are considering the map from order two tensors to order one tensors (vectors), then $\phi$ is determined by the coefficients (SO(3)-invariant) $(\phi^1,\phi^2,\phi^3)$:
$$\phi(\va \otimes \vb) = \phi^1(\va \otimes \vb) \va + \phi^2(\va \otimes \vb) \vb + \phi^3(\va \otimes \vb) \vc,$$
for all possible combinations $\va,\vb \in \{\va,\vb,\vc\}$. \textbf{Differentiate with the `cumbersome' method of constrcuting equivariant linear functionals in \citep{dym2020universality}, $\phi$ is a linear functional with no equivariance restriction, which is straightforward to construct with a fully-connected linear layer.} Combining the two parts, we have constructed the universal architecture with equivariant frame. 
\end{proof}
\begin{remark}[Incorporating geometric (physical) quantities as steerable features]\label{Incorporating}
The general definition of steerable features can be found in (8) of \citep{freeman1991design}. Briefly, frame coefficients of a function expanded in an equivariant frame (including sphere harmonics and tensor products of our equivariant frames) satisfy (8) of \citep{freeman1991design}. Then \citet{2110.02905} injects geometric quantities (forces, momentum, position vectors, ...) by performing CG tensor product between the chosen geometric quantities and each layers' output (also equivariant tensors), to make the model more expressive. However, the message from section 4 of \citep{dym2020universality} indicates that the computational expensive SO(3) representations (spherical harmonics and CG decomposition) can bring for the expressiveness, can be realized by direct tensor product (formula (\ref{tensor product bases}) is a special case of generating tensor product equivariant bases from our equivariant frame). As we have shown in the above proof, the direct tensor product under the equivariant frame can be realized by the multiplication of invariant scalars.

In conclusion, to add geometric quantities in a steerable way, we can first scalarize the steerable features, and then either encode the multiplication operation on invariant scalars and the steerable features explicitly or concatenate the steerable features with the invariant scalars and implement nonlinear transformations like MLP that can implicitly approximate the tensor product operation in ClofNet.
\end{remark}
\paragraph{Tensorized graph neural network with equivariant frame}
Tensorized graph neural network can be seen as another way of approximating permutation-equivariant polynomials on a graph, such that (\ref{universal approximation}) holds. Since the high level idea is the same, we neglect some details in this section.

Following (7) of \citep{maehara2019simple}, the graph is represented by
$$\mathcal{G}_0 = \{W \in \mathbb{R}^{n \times n} : |W(i,j)|\leq 1,\ \forall i,j\},$$
which can be seen as a graph with bounded scalar-valued edge features. To encode tensor field-valued edge features (all node features are identified with edge features) into this representation, we introduce two more indexes $1 \leq u,v \leq m$ for $W(i,j)$, such that
$$W(i,j,u,v) \in \mathbb{R}^2,\ \forall i,j,u,v,$$
where $u$ indicates the u-th component of the tensor field at the i-th node, and $v$ is defined similarly. In other words, $W(i,j,u,v)$ records the u-th component of the tensor field \textbf{scalarized by the equivariant frame} associated with the i-th node and the v-th component of the tensor field \textbf{scalarized by the equivariant frame} associated with the j-th node.  Then under the group action $g \in SO(3)$, SO(3) equivariance requires
$$W(g \cdot x_i, g \cdot x_j,\cdot,\cdot) = g W(x_i,x_j,\cdot,\cdot).$$
Therefore, we should identify $W_1$ and $W_2$ if
\begin{equation} \label{eq:equi}
W_1(g \cdot x^1_i, g \cdot x^1_j,\cdot,\cdot)  \equiv W_2(x^2_i,x^2_j,\cdot,\cdot),\end{equation}
for each $1 \leq i,j \leq n$.
We denote $W_1 \sim W_2$ if $W_1$ and $W_2$ are isomorphic and there exists $g \in SO(3)$ such that (\ref{eq:equi}) holds. In fact, since the group action on the 3D point cloud in a fiber-wise' way, the permutation of nodes $\sigma$ is commutative with the SO(3) group $g$ action:
$$W(\sigma(g \cdot x_i), \sigma(g \cdot x_j),\cdot,\cdot) = W(g \cdot \sigma(x_i), g \cdot \sigma(x_j),\cdot,\cdot).$$
We define the SO(3) graph space by $\Bar{\mathcal{G}}_0 = \mathcal{G}_0 / \sim$. The edit distance ((9) of \citep{maehara2019simple}) can be extended to include the additional indexes $u,v$. We further assume that $|W(i,j,u,v)| \leq 1$. Then the SO(3) graph space forms a metric space.

On the other hand, $g$ can also be seen as an orthogonal coordinate transformation from one base frame $(e_1,e_2,e_3)$ to another frame $(g \cdot e_1, g \cdot e_2, g \cdot e_3)$. Now, let $g_{ij}$ denote the coordinate transformation from the standard base frame to the equivariant frame of the edge $(i,j)$. Let
$$\Bar{W}_1(i,j,\cdot,\cdot) : = W_1(g_{ij} \cdot x^1_i, g_{ij} \cdot x^1_j,\cdot,\cdot).$$
Since the equivariant frame is intrinsically defined with respect to permutation of nodes, we have found a representative element $\Bar{W}_1$ for $W_1$ under the equivalence relation $\sim$ by combining all edges. \textbf{Note that similar representative elements can also be defined by projecting the features to the spherical harmonics space (TFN \citep{thomas2018tensor}), however, it's not sufficient to project the features into the radical direction only, like EGNN \citep{satorras2021nn}}.

Suppose we consider equivariant  vector field on graph $W$: $f: W \rightarrow \mathbb{R}^{3n}$, then it's easy to construct an edge-wise equivariant  vector field such that for each node $i$,
$$f_i = \sum_{j \in N(i)} f_{ij}.$$
Then under scalarization, each component of $f_{ij}$ becomes SO(3)-invariant scalar function. In this way, to prove the universality approximation property for equivariant vector field on graphs, it's sufficient to prove it on ordinary node-wise continuous functions $f: \Bar{\mathcal{G}}_0 \rightarrow \mathbb{R}$.

Let $\mathcal{F}$ be the set of simple (no loops) unweighted graphs with 1 labeled node, and let $F = (V(F), E(F)) \in \mathcal{F}$.
Let $G = (V(G), E(G); W)$ be a weighted graph, where $W \colon V(G) \times V(G) \to \mathbb{R}$ is the weighted adjacency matrix. for a given node $x \in V(G)$, we define the extended \emph{$1$-labeled homomorphism number} by (see \citep{LOVASZ2006962,maehara2019simple})
\begin{align}
\label{eq:labeled-homomorphism-density}
    \mathrm{hom}_{x}(F, W, G) 
    = \hspace{-2.0em} \sum_{\substack{\pi \colon V(F) \to V(G) \\ \pi(1) = x \ (i \in [k])}} & \prod_{i \in V(F)} W(\pi(i), \pi(i)) \nonumber \\
    \times \hspace{-1.0em} & \prod_{(i,j) \in E(F)}[\prod_{(u,v) \in G(i,j)}  W(\pi(i), \pi(j),u,v)],
\end{align}
where $G = \cup_{(i,j) \in E(G)} G(i,j)$, and $G(i,j)$ is a collection $(u,v)$ such that $u$ is a component of one tensor field on node $i$ and $v$ is a component of one tensor field on node $j$. From the definition, the extended  $1$-labeled homomorphism numbers are continuous node-wise functions in $W$ that is equivariant with respect to permutations of nodes. Now we introduce the universal approximation function class:
\begin{align}
    \mathcal{A} &= \left\{ W \mapsto \hspace{-0.3em} \sum_{F \in \mathcal{F}, G \in \mathcal{G}}^{\text{finite}} a_F \mathrm{hom}_{x}(F, W + 2 I,G) : a_F \in \mathbb{R} \right\},
\end{align}
where $I$ denotes the identity matrix of $n \times n$. By utilizing Stone-Weierstrass theorem, we prove the following theorem:
\begin{theorem}
\label{thm:universality-equivariance}
$\mathcal{A}$ is dense in the continuous scalar functions.
\end{theorem}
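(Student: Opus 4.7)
The plan is to verify the hypotheses of the Stone-Weierstrass theorem for $\mathcal{A} \subset C(\bar{\mathcal{G}}_0, \mathbb{R})$, adapting the permutation-equivariant argument of \citep{maehara2019simple} to the SO(3)-equivariant setting via the representative element $\bar{W}$ obtained from the equivariant basis $g_{ij}$.

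First I would confirm that $\bar{\mathcal{G}}_0$ is a compact Hausdorff space: the cube $\{W : |W(i,j,u,v)| \le 1\}$ is compact in finite-dimensional Euclidean space, and quotienting by the joint action of $S_n \times SO(3)$, both compact, preserves compactness, while the extended edit distance makes the quotient a metric (hence Hausdorff) space. Next, I would check $\mathcal{A} \subset C(\bar{\mathcal{G}}_0, \mathbb{R})$: every $\mathrm{hom}_x(F, W+2I, G)$ is a polynomial in the entries of $W$, hence continuous on $\mathcal{G}_0$; the sum over $\pi$ with $\pi(1)=x$ enforces invariance under label-preserving permutations, and evaluation on the canonical $\bar{W}$ produced by the equivariant basis makes the output SO(3)-invariant. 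Therefore these functions descend to continuous functions on $\bar{\mathcal{G}}_0$.

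To confirm that $\mathcal{A}$ is a unital subalgebra, I would use the gluing identity
$$\mathrm{hom}_x(F_1, W, G_1) \cdot \mathrm{hom}_x(F_2, W, G_2) = \mathrm{hom}_x(F_1 \sqcup_x F_2, W, G_1 \sqcup G_2),$$
where $F_1 \sqcup_x F_2$ identifies the labeled vertices of the two graphs; this is the straightforward extension of the analogous identity in \citep{maehara2019simple}. Linearity is by construction, and taking $F$ to be a single labeled vertex yields $\mathrm{hom}_x(F, W+2I, \emptyset) = (W+2I)(x,x) = 2$ when the diagonal of $W$ vanishes (the standard simple-graph convention), so after scaling $\mathcal{A}$ contains the constant $1$. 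The role of the shift $+2I$ is precisely to prevent the loop factors $W(\pi(i),\pi(i))$ from annihilating the sum in the simple-graph regime.

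The main obstacle is point separation. My plan is to reduce to the scalar-weighted case as follows. Given $W_1 \not\sim W_2$, pass to the equivariant-basis representatives $\bar{W}_1$, $\bar{W}_2$, so that $W_1 \sim W_2$ becomes equivalent to $\bar{W}_1$ and $\bar{W}_2$ being related by a pure label-preserving permutation of $V(G)$. Absorb the tensor-component indices $u,v \in [m]$ into the vertex set by a standard blow-up, producing scalar-weighted graphs on $nm$ vertices with one labeled vertex. I then invoke the labeled analogue of Lovász's reconstruction theorem from \citep{LOVASZ2006962}, as sharpened in \citep{maehara2019simple}, which states that two such graphs agree on all labeled homomorphism numbers $\mathrm{hom}_x(F, \cdot)$ iff they are related by a label-preserving isomorphism; the $+2I$ shift ensures the polynomial system is non-degenerate so the argument transfers. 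Combining the compactness, continuity, subalgebra, and separation steps, Stone-Weierstrass yields density of $\mathcal{A}$ in $C(\bar{\mathcal{G}}_0, \mathbb{R})$.
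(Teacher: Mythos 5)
Your proposal follows essentially the same route as the paper: verify the Stone--Weierstrass hypotheses, establish the subalgebra property via the same gluing identity $\mathrm{hom}_x(F_1,W,G_1)\,\mathrm{hom}_x(F_2,W,G_2)=\mathrm{hom}_x(F_1\sqcup F_2,W,G_1\sqcup G_2)$, obtain non-vanishing from the $+2I$ shift applied to a single labeled vertex, and reduce point separation to Lov\'asz's Lemma~2.4 on $1$-labeled homomorphism numbers. Your explicit attention to compactness of $\Bar{\mathcal{G}}_0$ and to why the functions descend to the quotient, and your blow-up of the tensor indices $(u,v)$ into an $nm$-vertex scalar-weighted graph, are slightly more careful packagings of steps the paper treats informally (the paper separates differing edge features by noting that polynomials separate them), but the key lemmas and overall structure coincide.
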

We need to check three conditions before applying Stone-Weierstrass theorem:
\begin{itemize}
   \item $\mathcal{A}$ forms a subalgebra inside continuous functions;
    \item $\mathcal{A}$ separates points, i.e., for any $x \neq y$, there exists $h \in \mathcal{A}$ such that $h(x) \neq h(y)$, and
    \item There exists $u \in \mathcal{A}$ that is bounded away from zero, i.e., $\inf_{x \in \mathcal{X}} |u(x)| > 0$.
\end{itemize}
\begin{lemma}
\label{lem:algebra-equivariant}
$\mathcal{A}'$ forms an algebra.
\end{lemma}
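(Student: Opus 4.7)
The plan is to verify the three algebra axioms: closure under addition, scalar multiplication, and pointwise multiplication. The first two are essentially free from the definition of $\mathcal{A}$ as the $\mathbb{R}$-linear span of the functions $W \mapsto \mathrm{hom}_x(F, W+2I, G)$, so I would dispatch them in one short sentence and concentrate on multiplicative closure.

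For pointwise multiplication, by bilinearity it is enough to show that a single product
\[
\mathrm{hom}_x(F_1, W+2I, G_1) \cdot \mathrm{hom}_x(F_2, W+2I, G_2)
\]
is itself an extended $1$-labeled homomorphism number. The construction is the standard ``gluing at the labeled vertex'' used by \citet{LOVASZ2006962} and \citet{maehara2019simple}: let $F_1 \cup_1 F_2 \in \mathcal{F}$ be the graph obtained by taking the disjoint union of $F_1$ and $F_2$ and identifying their labeled nodes, and let $G_1 \sqcup G_2$ be the corresponding union of tensor-index assignments on the edges (those coming from $F_1$-edges keep their $G_1$-assignment, those from $F_2$-edges keep their $G_2$-assignment). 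Then I would check that the double sum
\[
\sum_{\substack{\pi_1 \colon V(F_1) \to V(G)\\ \pi_1(1)=x}} \sum_{\substack{\pi_2 \colon V(F_2) \to V(G)\\ \pi_2(1)=x}} \prod_i \prod_{(i,j) \in E(F_\ell)} \prod_{(u,v) \in G_\ell(i,j)} (W+2I)(\pi_\ell(i), \pi_\ell(j), u, v)
\]
factors through a single sum over maps $\pi \colon V(F_1 \cup_1 F_2) \to V(G)$ with $\pi(1) = x$, because the two vertex sets intersect only at the labeled vertex (where both $\pi_1$ and $\pi_2$ are pinned to $x$) and the edge sets are disjoint. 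The diagonal factors $W(\pi(i),\pi(i))$ behave identically. This yields
\[
\mathrm{hom}_x(F_1 \cup_1 F_2, W+2I, G_1 \sqcup G_2),
\]
which lies in $\mathcal{A}$.

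From there, the general statement follows by distributing the product of two finite linear combinations $\sum a_{F_1,G_1}\mathrm{hom}_x(F_1,\cdot,G_1)$ and $\sum b_{F_2,G_2}\mathrm{hom}_x(F_2,\cdot,G_2)$ term by term and collecting the resulting homomorphism numbers with coefficients $a_{F_1,G_1} b_{F_2,G_2}$; this is a finite sum and thus again lies in $\mathcal{A}$.

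The only non-routine part is really bookkeeping: one must be careful that the tensor-index assignment $G_1 \sqcup G_2$ on $F_1 \cup_1 F_2$ is well-defined, since the labeled node is shared and several tensor components may be attached there. I expect the main obstacle will be confirming that the scalarization convention used in the extended homomorphism number is compatible with this gluing, i.e.\ that components attached to the shared labeled node from $F_1$ and $F_2$ can coexist without collision (which is the case because each edge carries its own pair $(u,v)$ and the products in \eqref{eq:labeled-homomorphism-density} are indexed per edge, not per node). Once this notational check is done, the proof is essentially the $1$-labeled gluing argument of \citet{maehara2019simple} transported verbatim to the tensor-indexed setting.
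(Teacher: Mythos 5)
Your proposal is correct and matches the paper's own argument: the paper likewise dismisses addition and scalar multiplication as straightforward and establishes multiplicative closure via the identity $\mathrm{hom}_{x}(F_1, W, G_1)\,\mathrm{hom}_{x}(F_2, W, G_2) = \mathrm{hom}_{x}(F_1 \sqcup F_2, W, G_1 \sqcup G_2)$, where $F_1 \sqcup F_2$ glues the two labeled vertices. Your additional bookkeeping about the double sum factoring and the well-definedness of the tensor-index assignment at the shared labeled node simply fills in details the paper leaves implicit.
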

\begin{proof}
It's straightforward to check the closeness under the addition and the scalar multiplication.
It is closed under the product because of the following identity.
\begin{align}
    \mathrm{hom}_{x}(F_1, W,G_1) \mathrm{hom}_{x}(F_2, W,G_2) = \mathrm{hom}_{x}(F_1 \sqcup F_2, W, G_1 \sqcup G_2).
\end{align}
where $F_1 \sqcup F_2$ is the graph obtained from the disjoint union of $F_1$ and $F_2$ by gluing the labeled vertices.
\end{proof}

\begin{lemma}
\label{lem:bounded-away-from-zero-equivariant}
$\mathcal{A}$ contains an element that is bounded away from zero.
\end{lemma}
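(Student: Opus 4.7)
The plan is to exhibit a single, explicit element of $\mathcal{A}$ whose value is uniformly bounded below by a positive constant on the entire domain, so no Stone--Weierstrass machinery or limiting argument is needed. The natural candidate exploits the shift $W \mapsto W + 2I$ that is built into the very definition of $\mathcal{A}$: this shift is engineered precisely so that the diagonal entries of $W+2I$ are bounded below by $1$.

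Concretely, I would let $F_0 \in \mathcal{F}$ be the trivial pattern graph consisting of a single labeled vertex (vertex $1$) with no edges, and pair it with an arbitrary tensor-component set $G_0$ (say $G_0 = \emptyset$, which is consistent since $E(F_0) = \emptyset$). Then I define
\begin{equation*}
    h(W) \;:=\; \mathrm{hom}_{x}(F_0,\, W+2I,\, G_0),
\end{equation*}
which lies in $\mathcal{A}$ by taking $a_{F_0} = 1$ and all other coefficients zero. Plugging into the definition of $\mathrm{hom}_x$, the only admissible map $\pi\colon V(F_0) \to V(G)$ is $\pi(1) = x$, the product over $E(F_0)$ is empty (hence equal to $1$), and the product over $V(F_0)$ collapses to the single factor $(W+2I)(x,x) = W(x,x) + 2$.

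Using the standing assumption $|W(i,j,\cdot,\cdot)| \leq 1$ from the construction of $\mathcal{G}_0$, in particular $|W(x,x)| \leq 1$, I conclude
\begin{equation*}
    h(W) \;=\; W(x,x) + 2 \;\geq\; 2 - 1 \;=\; 1 \;>\; 0
\end{equation*}
uniformly in $W$, so $\inf_{W} |h(W)| \geq 1 > 0$, as required.

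There is really no obstacle here: the shift by $2I$ was introduced in the definition of $\mathcal{A}$ precisely so that this trivial choice works. The only thing to be careful about is the convention regarding diagonal entries of $W$ for the self-loop factor $W(\pi(i),\pi(i))$ in the homomorphism formula; since the bound $|W(i,j)| \leq 1$ is stated to hold for all $i,j$ (including $i=j$), the argument goes through without any additional hypothesis.
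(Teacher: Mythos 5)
Your proposal is correct and matches the paper's own argument: both use the singleton pattern graph with empty edge set, so that the shift $W \mapsto W + 2I$ forces the single surviving factor $(W+2I)(x,x) = W(x,x)+2 \ge 1$ under the standing bound $|W(i,j)| \le 1$. Your explicit lower bound of $1$ is in fact cleaner than the paper's stated bound of $n-1$, which only follows if one (incorrectly for a $1$-labeled homomorphism) sums over all images of the labeled vertex; the conclusion is the same either way.
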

\begin{proof}
Let $\circ$ be the graph of 1 isolated vertices (singleton).
Then, $W \mapsto \mathrm{hom}_{x}(\circ, W + 2 I, 	\varnothing) \ge n - 1$ is bounded away from zero. 
\end{proof}
To prove the separate point property, we use the following theorem.
\begin{theorem}[Lemma 2.4 of {\citep{LOVASZ2006962}}]
\label{thm:lovasz2006rank}
Let $W_1, W_2 \in \mathbb{R}^{n \times n}$ be matrices with positive diagonal elements.
Let $x_1, x_2$ be two nodes of $W_1$ and $W_2$.
Then, $(W_1, x_1)$ and $(W_2, x_2)$ are isomorphic if and only if $\mathrm{hom}_{x_1}(F, W_1,\varnothing) = \mathrm{hom}_{x_2}(F, W_2,\varnothing)$ for all simple unweighted graph $F$.
\qed
\end{theorem}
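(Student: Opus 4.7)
The plan is to prove the two implications separately. The forward direction ($\Rightarrow$) is a change of variables: any permutation $\sigma \colon [n]\to[n]$ realizing $(W_1,x_1) \cong (W_2,x_2)$ satisfies $\sigma(x_1) = x_2$ and $W_1(i,j) = W_2(\sigma(i),\sigma(j))$. For each labeled $F$ the assignment $\pi \mapsto \sigma\circ\pi$ is a bijection from the set of $\pi\colon V(F)\to [n]$ with $\pi(1) = x_1$ onto those with $\pi(1) = x_2$, and it preserves both the vertex-weight factors $\prod_{i \in V(F)} W(\pi(i),\pi(i))$ and the edge-weight factors along $E(F)$ entry by entry, so summing gives $\mathrm{hom}_{x_1}(F, W_1,\varnothing) = \mathrm{hom}_{x_2}(F, W_2,\varnothing)$.

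For the reverse direction ($\Leftarrow$) I would follow the Lov\'asz strategy of promoting the hypothesis to a stronger statement about injective homomorphism counts and then reconstructing $(W,x)$ from those counts. The first step is M\"obius inversion on the partition lattice of $V(F)$: writing $\mathrm{hom}_x(F, W,\varnothing) = \sum_{\rho} \mathrm{inj}_x(F/\rho, W,\varnothing)$, where $\rho$ runs over partitions of $V(F)$ that do not merge the labeled vertex with any other and $F/\rho$ is the labeled quotient multigraph, the system is upper-triangular in refinement order and hence invertible, so equality of all labeled hom counts yields $\mathrm{inj}_{x_1}(F, W_1,\varnothing) = \mathrm{inj}_{x_2}(F, W_2,\varnothing)$ for all simple labeled $F$. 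The second step is reconstruction by clever choice of test graphs: $F =$ single labeled vertex pins down $W_1(x_1,x_1) = W_2(x_2,x_2)$; attaching a pendant edge extracts a weighted row-sum identity at the root; attaching progressively larger labeled trees and cycles isolates polynomial symmetric functions in the entries of $W$ with $x$ held fixed. Taking formal $\mathbb{R}$-linear combinations of such $F$ produces a separating family of $\mathrm{Stab}(x) \subset S_n$-invariant polynomials on $\mathbb{R}^{n\times n}$.

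The main obstacle is the final invariant-theoretic step: translating ``every labeled $\mathrm{Stab}(x)$-invariant polynomial agrees on $W_1$ and $W_2$'' into the existence of a permutation $\sigma \in S_n$ with $\sigma(x_1)=x_2$ and $W_1(i,j) = W_2(\sigma(i),\sigma(j))$. In general, polynomial invariants separate only orbit closures, not orbits, so a compactness or openness argument is needed to rule out degenerate coincidences in which distinct orbits share all invariants through a common limit. This is precisely where the positivity of diagonals enters: it confines $W$ to an open invariant set where every vertex carries strictly positive ``activation'' weight, preventing vertices of distinct types from coalescing in a limit and forcing the $\mathrm{Stab}(x)$-orbit of $W$ to be closed. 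Converting the positivity hypothesis into this orbit-closedness claim, and thereby upgrading invariant equality to orbit equality, is the technical heart of the argument and is the portion I would expect to absorb the bulk of the work in a self-contained proof; as a citation of Lemma 2.4 of \citep{LOVASZ2006962}, we take this step as granted.
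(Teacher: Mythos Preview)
The paper does not prove this statement at all: it is quoted verbatim as Lemma~2.4 of \citep{LOVASZ2006962} and closed with a bare \qed. So there is nothing to compare your argument against in the paper itself; the authors simply import the result.

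That said, your sketch has a real gap in the reverse direction. Your final paragraph misidentifies both the obstacle and the role of the positivity hypothesis. The group acting here is $\mathrm{Stab}(x)\cong S_{n-1}$, which is \emph{finite}; for a finite group acting polynomially on $\mathbb{R}^N$, every orbit is a finite set, hence Zariski-closed, and invariant polynomials separate orbits outright (average an interpolating polynomial over the group). No limit or orbit-closure argument is needed, and positivity of the diagonals is irrelevant to that step. Where positivity actually enters is earlier and more concrete: if $W(i,i)=0$ then every $\pi$ hitting vertex $i$ contributes a zero factor, so node $i$ is invisible to all $\mathrm{hom}_x(F,W,\varnothing)$ and two non-isomorphic matrices can share every hom count. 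Positive diagonals rule out such ``dead'' vertices and also let one renormalize so that all vertex weights equal $1$, which is what makes the M\"obius step go through.

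On that M\"obius step: your identity $\mathrm{hom}_x(F,W)=\sum_\rho \mathrm{inj}_x(F/\rho,W)$ is not literally correct in the vertex-weighted setting. A block $B$ of size $|B|$ in $\rho$ contributes $W(v,v)^{|B|}$, not $W(v,v)$, so the summand is not $\mathrm{inj}_x$ of the naive quotient but of a decorated multigraph carrying block sizes as exponents. This is exactly the wrinkle that the positive-diagonal normalization (or, alternatively, passing to multigraphs with loop and edge multiplicities) is designed to remove; without addressing it, the triangular system you invoke is not the one you wrote down.
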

Therefore, two non-isomorphic graphs are separable by $\mathcal{A}$. On the other hand, for two isomorphic graphs with different edge features, we can choose nonempty $G$ to separate them by the above theorem (since polynomials separate different edge features of a given edge).
\begin{lemma}
$\mathcal{A}$ separates points in $\Bar{\mathcal{G}}_0$.
\end{lemma}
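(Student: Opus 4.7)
The plan is to reduce to Theorem~\ref{thm:lovasz2006rank} by viewing the tensor-valued structure as additional scalar weight data that can be extracted through appropriate choices of $G$. First, I would fix representatives $\bar W_1, \bar W_2$ of the two equivalence classes in $\bar{\mathcal G}_0$ via the equivariant basis construction described above, so that the only residual equivalence between them reduces to node permutation. The shift $W+2I$ guarantees that the diagonal entries of $\bar W_i + 2I$ lie in $[1,3]$ and in particular are positive, matching the hypothesis required by Theorem~\ref{thm:lovasz2006rank}.

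Next, I would split into two cases. If the scalar parts of $\bar W_1$ and $\bar W_2$ obtained by forgetting the $(u,v)$ indices are already non-isomorphic as weighted graphs, then Theorem~\ref{thm:lovasz2006rank} applied with $G=\varnothing$ directly supplies a simple unweighted graph $F$ with $\mathrm{hom}_{x_1}(F,\bar W_1+2I,\varnothing)\neq\mathrm{hom}_{x_2}(F,\bar W_2+2I,\varnothing)$, and the associated element of $\mathcal A$ separates the two points. Otherwise the scalar parts are isomorphic but the tensor features differ, in the sense that no permutation aligning the scalar parts also aligns the tensor-valued features. For any fixed choice of $G(i,j)$, the factor $\prod_{(u,v)\in G(i,j)} W(\pi(i),\pi(j),u,v)$ is a polynomial in the tensor entries at the edge $(\pi(i),\pi(j))$; since these entries lie in the compact set $[-1,1]$, varying $G$ produces an algebra of polynomial features that separates points in the tensor fibre at each edge by the Stone-Weierstrass theorem. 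Passing to the auxiliary scalar weighted graph whose edge weights encode such a separating polynomial feature and reapplying Theorem~\ref{thm:lovasz2006rank} then produces an $F$ distinguishing the two configurations.

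The main obstacle is the bookkeeping in the second case: when several node permutations align the scalar parts, one must rule out the possibility that the resulting polynomial features symmetrize into equality. I would handle this by choosing, for each admissible $G$, an enriched scalar weighted graph whose edge weight at $(i,j)$ encodes a tailored polynomial functional of $\bar W(i,j,\cdot,\cdot)$, arguing that for some choice of polynomial this enriched graph remains non-isomorphic whenever the original tensor configurations genuinely differ, and then invoking Theorem~\ref{thm:lovasz2006rank} on this enriched graph to obtain the desired separating element of $\mathcal A$.
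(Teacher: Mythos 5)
Your proposal follows essentially the same route as the paper: the paper's entire argument for this lemma is the two-case reduction you describe --- non-isomorphic scalar parts are separated via Theorem~\ref{thm:lovasz2006rank} with $G=\varnothing$, and isomorphic graphs with differing tensor features are separated by choosing a nonempty $G$, ``since polynomials separate different edge features of a given edge.'' If anything you are more careful than the paper, which neither makes the reduction to representatives $\bar W$ explicit nor addresses the symmetrization-over-permutations issue you flag in your final paragraph.
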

The usual \textbf{tensorized graph neural network can express $E(F)$-fold scalar-valued tensor product}. However we can further extend it trivially to include vector-valued features on each edge by enlarging the index space's dimension and performing scalarization, and we denote it by the extended tensorized graph neural network. 
As it was demonstrated in \citep{maehara2019simple}, the 1-labeled homomorphism numbers can be implemented in a tensorized graph neural network \citep{keriven2019universal}. Then, the extended 1-labeled homomorphism numbers can be implemented in an extended tensorized graph neural network with equivariant bases in the same way. Combining the above, we have proved the universality property of  tensorized graph neural network equipped with equivariant bases:
\begin{theorem} \label{appro}
The tensorized graph neural network equipped with equivariant bases has universality in continuous equivariant functions.
\qed
\end{theorem}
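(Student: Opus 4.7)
My plan is to combine Theorem~\ref{thm:universality-equivariance} with an explicit network-realization argument showing that every element of $\mathcal{A}$ can be computed exactly by a tensorized GNN with equivariant bases. First, I would invoke the scalarization reduction established above: any continuous SO(3)-equivariant vector field $f$ on graphs decomposes as $f_i = \sum_{j \in N(i)} f_{ij}$, and each component of $f_{ij}$ expressed in the intrinsic frame $g_{ij}$ is an SO(3)-invariant continuous scalar function on $\Bar{\mathcal{G}}_0$. It therefore suffices to approximate an arbitrary continuous node-wise scalar function $f \colon \Bar{\mathcal{G}}_0 \to \mathbb{R}$ uniformly by the network class, since summing edge-wise outputs reassembles an arbitrary equivariant vector field to the same accuracy.

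Next, I would apply Theorem~\ref{thm:universality-equivariance} to approximate $f$ within $\varepsilon$ by a finite linear combination $\sum_F a_F\, \mathrm{hom}_x(F, W + 2I, G)$. The remaining task is to exhibit, for each summand, a tensorized GNN with equivariant bases that computes $\mathrm{hom}_x(F, W + 2I, G)$ exactly, following the construction of \citep{maehara2019simple, keriven2019universal}: the product over edges $(i,j) \in E(F)$ is realized by message-passing updates that multiply the appropriate $W(\pi(i),\pi(j),u,v)$ factors, the diagonal terms $\prod_i W(\pi(i),\pi(i))$ are absorbed into node-wise activations, and the sum over admissible homomorphisms $\pi$ is produced by tensor contractions from the standard GNN primitives. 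The extra indices $(u,v)$ enlarge the channel dimension but are handled by ordinary pointwise tensor operations, so the architecture requires no new primitives relative to the scalar construction in \citep{maehara2019simple}.

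The distinguishing step from the scalar case is that, before any message is formed, the tensor-field features at the endpoints of each edge $(i,j)$ are first re-expressed in the equivariant basis $(g_{ij}\cdot e_1, g_{ij}\cdot e_2, g_{ij}\cdot e_3)$; this converts $W$ to its canonical representative $\Bar{W}$ on $\Bar{\mathcal{G}}_0$ and renders every scalar message SO(3)-invariant. Summing the edge contributions back in the global frame at the end then restores SO(3) equivariance of the vector-valued output, closing the loop between the scalar universality of $\mathcal{A}$ on $\Bar{\mathcal{G}}_0$ and the vector equivariant universality of the network.

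I expect the main technical obstacle to lie not in the algebraic realization of $\mathrm{hom}_x$---which is essentially a tensor-indexed rewriting of the scalar argument in \citep{maehara2019simple}---but in verifying that the equivariant frame $g_{ij}$ can be chosen as a \emph{continuous} function on (a dense subset of) the SO(3) graph space. In degenerate geometric configurations such as collinear neighborhoods or coincident nodes, the intrinsic frame may fail to be uniquely defined, so the scalarization map $W \mapsto \Bar{W}$ is only continuous on a generic open subset of $\Bar{\mathcal{G}}_0$. One has to handle these degeneracies either by restricting to that open subset and extending by uniform approximation, or by averaging over a family of admissible frames; this is the only place where the proof departs non-trivially from \citep{maehara2019simple} and must be carried out with care so that Stone--Weierstrass density transfers to the full network class.
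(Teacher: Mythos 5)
Your proposal follows essentially the same route as the paper: reduce continuous equivariant vector fields to SO(3)-invariant node-wise scalars via the edge-wise scalarization in the intrinsic frames $g_{ij}$, invoke Theorem~\ref{thm:universality-equivariance} for density of the algebra $\mathcal{A}$ of extended $1$-labeled homomorphism numbers, and realize each $\mathrm{hom}_x(F, W+2I, G)$ exactly in the tensorized GNN with equivariant bases as in \citep{maehara2019simple} and \citep{keriven2019universal}. The only difference is your closing caveat about the continuity of the frame map $W \mapsto \Bar{W}$ at degenerate configurations, which is a genuine subtlety the paper passes over in silence, so your treatment is if anything more careful than the original.
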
    

\subsection{Additional Experiments}
\subsubsection{Newtonian many-body system}
\label{app:newton}

\paragraph{Dataset}
For the Newtonian many-body system experiment, we follow EGNN \citep{satorras2021en} to generate the trajectories for four systems. The original source code for generating trajectories comes from \citep{kipf2018neural} (\url{https://github.com/ethanfetaya/NRI}) and is modified by EGNN (\url{https://github.com/vgsatorras/egnn}). We further extend the version of EGNN to three new settings, as described in Section \ref{draft:newton}. Similar to EGNN, we generate 5,000 timesteps for each trajectory and slice them from 3,000 to 4,000 to move away from the transient phase.
For the second experiment, we generate a new training dataset with 50,000 trajectories for the G+ES system, following the same procedure. The validation and testing datasets remain unchanged from the first experiment. 
We provide the physical evolution equations mentioned in Section \ref{draft:newton} as follows.

\textbf{ES. } This system consists of $N$ particles controlled by the electrostatic field, i.e., for each trajectory we are provided with initial positions $\mX(0) \in \mathbb{R}^{N \times 3}$, velocities $\mV(0) \in \mathbb{R}^{N \times 3}$ and charges $\{q_1, \cdots q_N\} \in \{-1, 1\}^N$. The time evolution of the particles is given by
\begin{equation}
\ddot{\vx}_i = \sum_{j \neq i}  q_iq_j \frac{\vx_i - \vx_j}{\norm{\vx_i - \vx_j}^3},\ \ \ \ 1 \leq i \leq N.
\end{equation}

\textbf{G+ES. } This system consists of $N$ particles controlled by both the electrostatic field and  external gravity force of the form: $\vf_{g}=(0,\, 0,\, g)$. The time evolution of the particles are given by
\begin{equation}
\ddot{\vx}_i = \sum_{j \neq i}  q_iq_j \frac{\vx_i - \vx_j}{\norm{\vx_i - \vx_j}^3} + \vf_{g},\ \ \ \ 1 \leq i \leq N.
\end{equation}
\textbf{L+ES. } This system consists of N particles controlled by both the electrostatic field and a Lorentz-like force field, which means there exists a force perpendicular to the direction of velocity $\vv$, i.e., $\vf_{l}(\vv) = q\vv \times \mathbf{B}$, where $q$ and $\mathbf{B}$ denote the charge of particles and the direction vector of the electromagnetic field respectively. 
The time evolution of the particles is given by:
\begin{equation}
\ddot{\vx}_i = \sum_{j \neq i} {[q_iq_j \frac{\vx_i - \vx_j}{\norm{\vx_i - \vx_j}^3} + \vf_{l}^i(\vv_i)]},\ \ \ \ 1 \leq i \leq N.
\end{equation}
\paragraph{Models} We implement all baselines (GNN, Radial Field, TFN and SE(3)-Transformer) by referring the codebase of EGNN. The architecture parameters (e.g., feature dimension, activation function) are adapted from EGNN. The learning rate and training epochs are tuned independently for each model. 

\subsubsection{Partially observed N-body experiment} \label{partial}
The N-body dynamical systems considered in the main text can be classified into the category of Markov process, where the system's immediate future state $\mX(t+\Delta t)$ (future positions and velocities) is fully determined by its current state $\mX(t)$ (current positions and velocities).In other words, let $\Delta t \rightarrow 0$, there exists $f$ such that $$\dot{\mX}(t) = f(\mX(t))$$
where $\dot{\mX}(t)$ denotes the gradient of $\mX(t)$,
and $f$ is independent of time $t$. 

Now we define a harder trajectory prediction task of a non-Markov dynamical system \textbf{POS}, such that $\dot{\mX}(t) = f(\mX(t),t)$ to further test ClofNet. \textbf{POS} consists of six particles under Newton's gravitation force, but only four of them could be observed, i.e. for the whole trajectory, we are provided with positions $\mX(t) \in \mathbb{R}^{4 \times 3}$ and velocities $\mV(t) \in \mathbb{R}^{4 \times 3}$ of \textbf{four} particles. Since another two unobserved particles are hidden, the sub-system of the given four particles is non-Markov. The time evolution of the particles is given by
\begin{equation}
\ddot{\vx}_i = \sum_{j \neq i}  -m_j \frac{\vx_i - \vx_j}{\norm{\vx_i - \vx_j}^3},\ \ \ \ 1 \leq i \leq 4.
\end{equation}

\textbf{Problem definition.} In this experiment, we apply our model to predict the long-term motion trajectory of \textbf{POS} given its initial ($t=0$) position and velocity. Following \citep{zhuang2020adaptive,li2021machine}, we formulate the trajectory prediction as two tasks: \textbf{Interpolation} and \textbf{Extrapolation}. The experimental setting is as follows. To generate the trajectory given the initial condition, we use observations $\vx_i(t), t \in \{\Delta t, \, 2\Delta t \dots , T_1\}$ as the training labels and observations $\vx_i(t), t \in \{T_1+\Delta t,\, T_1+\Delta t, \dots , T_2\}$ as the validation set. To evaluate the interpolation and extrapolation capacity of all methods, the observations $\vx_i(t), t \in \{\frac{1}{2}\Delta t,\, \frac{3}{2}\Delta t, \dots , T_1+\frac{1}{2}\Delta t\}$ and $\vx_i(t), t \in \{T_2+\Delta t,\, T_2+2\Delta t, \dots , T_3\}$ are used as \textbf{interpolation} and \textbf{extrapolation} test sets respectively. We measure the mean square error (MSE) between the predicted trajectory and ground truth. To measure the exactness of equivariance, we follow \citep{fuchs2020se} to apply uniformly sampled SO(3)-transformations on the input and output. The MSE between the predicted trajectory with rotated input and rotated ground truth could reflect the transformation robustness of method. The normalized distance between the rotated prediction with original input and the original prediction with the rotated input defines the equivariance error $\Delta_{EQ}$:
\begin{equation}
    \Delta_{EQ} = \norm {L_s \Phi (\vx) - \Phi L_s(\vx)} / \norm{L_s \Phi (\vx)},
\end{equation}
where $L_s$ and $\Phi$ denote SO(3) transformations and equivariant neural networks, respectively.

\textbf{Learning Framework and Implementation Details.} Inspired by \citep{norcliffe2020second}, for a Newtonian system, we utilize ClofNet to parameterize its acceleration vector field and adopt numerical ODE solver to integrate both the position and velocity trajectories. Only the MSE between predicted position trajectory and ground truth is taken as the loss penalty:
\begin{equation} 
\mathcal{L}(\theta) = \frac{1}{n}\sum_{i=1}^n L_2(x_{t_i}, \text{ODE}(\vx_{t_0},\vv_{t_0}, t_0, t_i, \Theta)),\ \ t_0 < t_1 < \cdots < t_n,
\end{equation}
where $(\vx_{t_0},\vv_{t_0})$ and $\Theta$ denote the initial condition of the system and the parameters of ClofNet $\Phi$. We compare our method to another two efficient equivariant models designed for vector field modeling: Radial Field \citep{kohler2019equivariant} and EGNN \citep{satorras2021nn,satorras2021en}.
Following \citep{zhuang2020adaptive}, the trajectory is simulated using the \textit{Dopri5} solver \citep{dormand1980family} with the tolerance to $10^{-7}$ and the modified physical rules.
The trajectory points are uniformly sampled with $\Delta t=10^{-3}$.
We implement all baselines and our method with Pytorch \citep{paszke2019pytorch}. All models use the same ODE solver (\textit{Dopri5}) as the evolving blocks and are trained with Adam optimizer \citep{kingma2014adam} via an MSE loss for 800 epochs. We set the number of layers to $2$ for all models and adjust the hidden dimensions of each model separately to keep the parameters in the same level. We adopt EGNN from \citep{satorras2021nn} for outputting vectors. 

\begin{table}[t]
  \caption{MSE of the \textbf{POS} dataset. \textit{Inter.} and \textit{Extra.} denote the interpolation and extrapolation task respectively.}
  \label{table:pos}
  \centering
  \begin{tabular}{c|cccc}
    \toprule
      Setting & Method & Inter. & Extra. & $\Delta_{EQ}$\\
    \midrule
    \multirow{3}{*}{POS} 
    & Radial Field & $1.996$ & $7.665$ & $5.77 \cdot 10^{-6}$ \\
    & EGNN & $0.726$ & $6.449$& $9.19\cdot 10^{-7}$ \\
    & ClofNet & $\textbf{0.138}$ & $\textbf{2.502}$& $4.13\cdot 10^{-6}$ \\
    \bottomrule
  \end{tabular}
\end{table}
\textbf{Results.}
From Table \ref{table:pos}, we have the following conclusions: (1) ClofNet outperforms all other default equivariant methods in the interpolation and extrapolation tasks, which demonstrates that ClofNet exhibits stronger expressive power by representing geometric information losslessly; (2)The equivariance error $\Delta_{EQ}$ of the three models are small (Due to the existence of numerical errors, $\Delta_{EQ}$ cannot be strictly zero), which empirically demonstrate the equivariance of these models.

\subsubsection{Molecular conformation generation}
\label{app:mcg}
\paragraph{Dataset}
For each dataset, $40,000$ molecules are randomly drawn and $5$ most likely conformations (sorted by energy) are selected for each molecule, and $200$ molecules are drawn from the remaining data, which results in $200,000$ conformations in the training set, $22,408$ and $14,324$ conformations in the test set for GEOM-QM9 and GEOM-Drugs datasets, respectively. The distances over distributions task are evaluated on the ISO17 dataset, where we follow the setup in \citep{simm2019generative}.
\paragraph{Learning Framework} 
Following \citep{shi2021learning}, for this first-order statistical ensemble system, we leverage a score-based generative modeling framework to estimate the gradient field of atomic positions (See more details about score-based networks in \citep{shi2021learning,song2020score} or Appendix \ref{evo}). A detailed illustration on the \textbf{equivariance} of the gradient fields (the score function) is also given in Appendix \ref{evo}.  The optimization objective of ClofNet $\Phi$ can be summarized as:
\begin{align} 
    \label{score}
    \small
    \mathcal{L}(\theta) = &\frac{1}{n}\sum_{i=1}^n \mathbb{E}_{\mX(0)}\{\lambda(t_i) \norm{\nabla H_{t_i} (\mX(t_i))\nonumber\\ &- \Phi(\mX(t_i),t_i,\Theta)}_2^2, 
     t_0 < t_1 < \cdots < t_n,
\end{align}

where $\lambda(t): [0,\,T] \rightarrow \mathbb{R}^{+}$ is a positive weighting function and $\nabla H_{t_i}$ is the pre-computed gradient field of noisy atomic positions (see Appendix \ref{evo}). Once the score network is optimized, we can use an annealed Langevin dynamics (ALD) sampler 
to generate conformations \citep{song2019generative,song2020score}.
\paragraph{Implementation Details} Besides the geometric input, we feed the node type, edge type and relative distances as extra node/edge attributes into the graph transformer block. 
Our score-based training framework keeps the same as \citep{shi2021learning}. The maximum and minimum noise scales are set to $10$ and $0.01$. Let $\{\sigma_i\}_{i=1}^L$ be a positive geometric progression scheme with a common ratio, we split the noise range into 50 levels.
For the reverse process, we follow ConfGF to use the annealed Langevin dynamics sampling method to generate stable structures. The sample step size $\eta_s$ is chosen according to \citep{song2020improved}. All hyper-parameters mentioned in the forward and reverse process are kept the same as \citep{shi2021learning}.
The results reported in Table \ref{table:molecular} are copied from \citep{shi2021learning} considering that we rigorously evaluate ClofNet on the same benchmark and data split setting. 

\paragraph{Conformation Generation}
Here we introduce the calculation equation of RMSD:
\begin{equation}
\text{RMSD}(R,\hat{R}) = \min(\frac{1}{n}\sum_{i=1}^{n}||R_i-\hat{R_i}||^2)^{\frac{1}{2}},
\end{equation}
where $n$ denotes the number of heavy atoms.

We visualize several conformations in the Drugs dataset in Figure \ref{fig:visual} that are best aligned with the reference ones generated by different methods, illustrating ClofNet's superior capacity on generating high-quality drug molecular conformation. 
\begin{figure*}[t]
    \centering
    \includegraphics[width=0.9\linewidth]{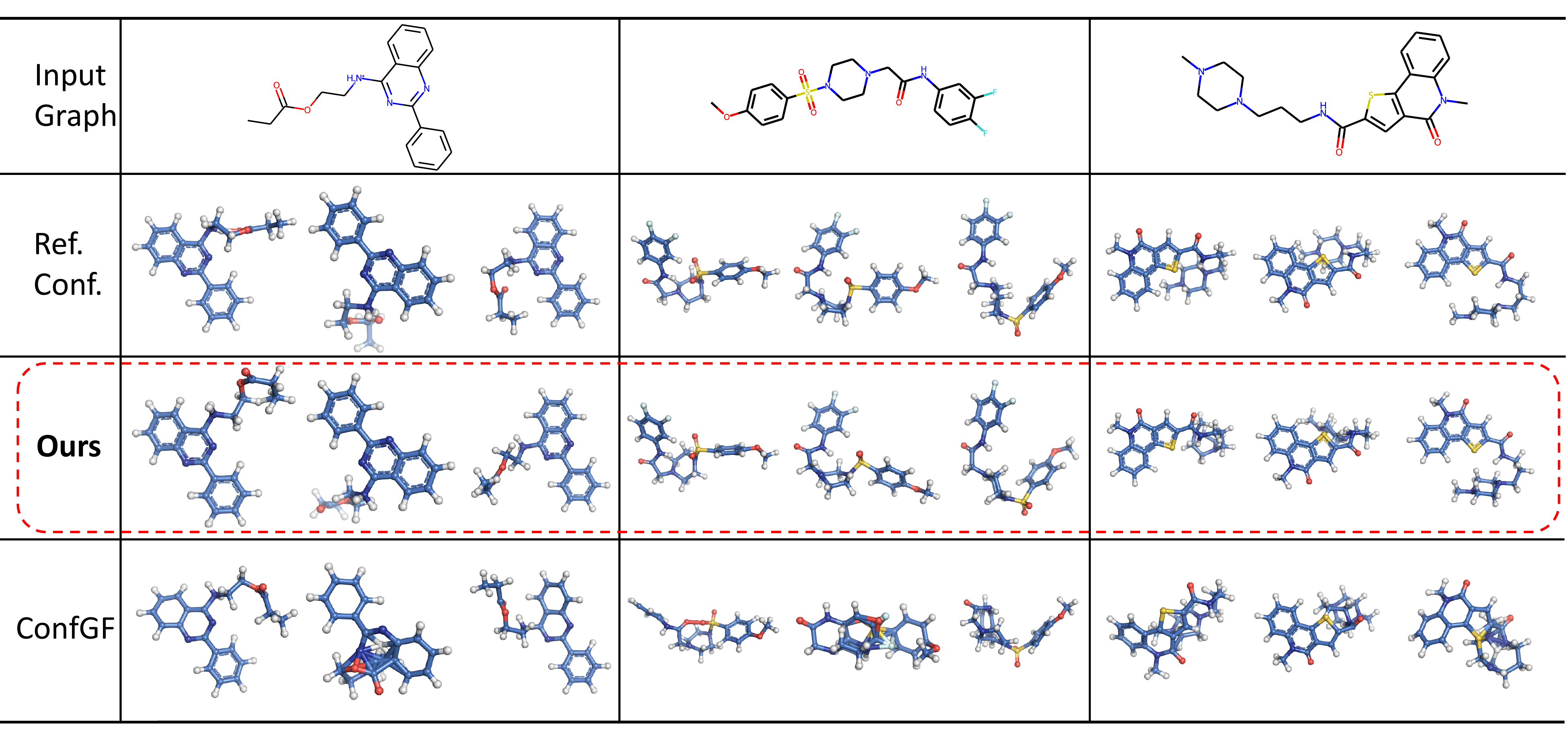}
    \caption{Visualizations of generated conformations. For each molecule randomly selected from GEOM-Drugs dataset, we sample multiple conformations and show the best-aligned ones with the reference ones.}
    \label{fig:visual}
\end{figure*}



\end{document}